\documentclass[pra,aps,twocolumn,twoside,superscriptaddress]{revtex4}

%\documentclass[12pt]{article}

% part only for tex. There are 9 parts of preliminary abbreviations.

%1=general environment

\usepackage{amsmath,amsfonts,amssymb,caption,color,epsfig,graphics,graphicx,hyperref,latexsym,mathrsfs,revsymb,theorem,url,verbatim,enumerate,epstopdf,tikz,float,multirow,booktabs,appendix}
\hypersetup{colorlinks,linkcolor={blue},citecolor={red},urlcolor={blue}}
\usetikzlibrary{arrows, decorations.markings}

\tikzstyle{vecArrow} = [thick, decoration={markings,mark=at position
	1 with {\arrow[semithick]{open triangle 60}}},
double distance=1.4pt, shorten >= 5.5pt,
preaction = {decorate},
postaction = {draw,line width=1.4pt, white,shorten >= 4.5pt}]
\tikzstyle{innerWhite} = [semithick, white,line width=1.4pt, shorten >= 4.5pt]

\newtheorem{definition}{Definition}
\newtheorem{proposition}[definition]{Proposition}
\newtheorem{lemma}[definition]{Lemma}

\newtheorem{theorem}[definition]{Theorem}
\newtheorem{corollary}[definition]{Corollary}
\newtheorem{conjecture}[definition]{Conjecture}

\newtheorem{remark}[definition]{Remark}
\newtheorem{example}[definition]{Example}
\newtheorem{question}[definition]{Question}

\def\bcj{\begin{conjecture}}
	\def\ecj{\end{conjecture}}
\def\bcr{\begin{corollary}}
	\def\ecr{\end{corollary}}
\def\bd{\begin{definition}}
	\def\ed{\end{definition}}
\def\bea{\begin{eqnarray}}
\def\eea{\end{eqnarray}}
\def\bem{\begin{enumerate}}
	\def\eem{\end{enumerate}}
\def\bex{\begin{example}}
	\def\eex{\end{example}}
\def\bim{\begin{itemize}}
	\def\eim{\end{itemize}}
\def\bl{\begin{lemma}}
	\def\el{\end{lemma}}
\def\bma{\begin{bmatrix}}
	\def\ema{\end{bmatrix}}
\def\bpf{\begin{proof}}
	\def\epf{\end{proof}}
\def\bpp{\begin{proposition}}
	\def\epp{\end{proposition}}
\def\bqu{\begin{question}}
	\def\equ{\end{question}}
\def\br{\begin{remark}}
	\def\er{\end{remark}}
\def\bt{\begin{theorem}}
	\def\et{\end{theorem}}

%1

\def\squareforqed{\hbox{\rlap{$\sqcap$}$\sqcup$}}
\def\qed{\ifmmode\squareforqed\else{\unskip\nobreak\hfil
		\penalty50\hskip1em\null\nobreak\hfil\squareforqed
		\parfillskip=0pt\finalhyphendemerits=0\endgraf}\fi}
\def\endenv{\ifmmode\;\else{\unskip\nobreak\hfil
		\penalty50\hskip1em\null\nobreak\hfil\;
		\parfillskip=0pt\finalhyphendemerits=0\endgraf}\fi}
% unavailable for beamer:
\newenvironment{proof}{\noindent \textbf{{Proof.~} }}{\qed}
\def\Dbar{\leavevmode\lower.6ex\hbox to 0pt
	{\hskip-.23ex\accent"16\hss}D}
% Define a new 'leo' style for the package that will use a smaller font.
\makeatletter
\def\url@leostyle{%
	\@ifundefined{selectfont}{\def\UrlFont{\sf}}{\def\UrlFont{\small\ttfamily}}}
\makeatother
% Now actually use the newly defined style.
\urlstyle{leo}

\def\bcj{\begin{conjecture}}
	\def\ecj{\end{conjecture}}
\def\bcr{\begin{corollary}}
	\def\ecr{\end{corollary}}
\def\bd{\begin{definition}}
	\def\ed{\end{definition}}
\def\bea{\begin{eqnarray}}
\def\eea{\end{eqnarray}}
\def\bem{\begin{enumerate}}
	\def\eem{\end{enumerate}}
\def\bex{\begin{example}}
	\def\eex{\end{example}}
\def\bim{\begin{itemize}}
	\def\eim{\end{itemize}}
\def\bl{\begin{lemma}}
	\def\el{\end{lemma}}
\def\bpf{\begin{proof}}
	\def\epf{\end{proof}}
\def\bpp{\begin{proposition}}
	\def\epp{\end{proposition}}
\def\bqu{\begin{question}}
	\def\equ{\end{question}}
\def\br{\begin{remark}}
	\def\er{\end{remark}}
\def\bt{\begin{theorem}}
	\def\et{\end{theorem}}

\def\btb{\begin{tabular}}
	\def\etb{\end{tabular}}

\newcommand{\nc}{\newcommand}

%2=alphabet

\def\d{\delta}

\nc{\bbA}{\mathbb{A}} \nc{\bbB}{\mathbb{B}} \nc{\bbC}{\mathbb{C}}
\nc{\bbD}{\mathbb{D}} \nc{\bbE}{\mathbb{E}} \nc{\bbF}{\mathbb{F}}
\nc{\bbG}{\mathbb{G}} \nc{\bbH}{\mathbb{H}} \nc{\bbI}{\mathbb{I}}
\nc{\bbJ}{\mathbb{J}} \nc{\bbK}{\mathbb{K}} \nc{\bbL}{\mathbb{L}}
\nc{\bbM}{\mathbb{M}} \nc{\bbN}{\mathbb{N}} \nc{\bbO}{\mathbb{O}}
\nc{\bbP}{\mathbb{P}} \nc{\bbQ}{\mathbb{Q}} \nc{\bbR}{\mathbb{R}}
\nc{\bbS}{\mathbb{S}} \nc{\bbT}{\mathbb{T}} \nc{\bbU}{\mathbb{U}}
\nc{\bbV}{\mathbb{V}} \nc{\bbW}{\mathbb{W}} \nc{\bbX}{\mathbb{X}}
\nc{\bbZ}{\mathbb{Z}}

%\bbQ denotes the set of rational, real and integer numbers in transparency.

\nc{\bA}{{\bf A}} \nc{\bB}{{\bf B}} \nc{\bC}{{\bf C}}
\nc{\bD}{{\bf D}} \nc{\bE}{{\bf E}} \nc{\bF}{{\bf F}}
\nc{\bG}{{\bf G}} \nc{\bH}{{\bf H}} \nc{\bI}{{\bf I}}
\nc{\bJ}{{\bf J}} \nc{\bK}{{\bf K}} \nc{\bL}{{\bf L}}
\nc{\bM}{{\bf M}} \nc{\bN}{{\bf N}} \nc{\bO}{{\bf O}}
\nc{\bP}{{\bf P}} \nc{\bQ}{{\bf Q}} \nc{\bR}{{\bf R}}
\nc{\bS}{{\bf S}} \nc{\bT}{{\bf T}} \nc{\bU}{{\bf U}}
\nc{\bV}{{\bf V}} \nc{\bW}{{\bf W}} \nc{\bX}{{\bf X}}
\nc{\bZ}{{\bf Z}} \nc{\bm}{{\bf m}} \nc{\bv}{{\bf v}}
\nc{\ba}{{\bf a}} \nc{\be}{{\bf e}} \nc{\bu}{{\bf u}}
\nc{\brr}{{\bf r}} \nc{\bc}{{\bf c}}

%\bQ, \bR, \bZ denotes the set of rational, real and integer numbers.

\nc{\cA}{{\cal A}} \nc{\cB}{{\cal B}} \nc{\cC}{{\cal C}}
\nc{\cD}{{\cal D}} \nc{\cE}{{\cal E}} \nc{\cF}{{\cal F}}
\nc{\cG}{{\cal G}} \nc{\cH}{{\cal H}} \nc{\cI}{{\cal I}}
\nc{\cJ}{{\cal J}} \nc{\cK}{{\cal K}} \nc{\cL}{{\cal L}}
\nc{\cM}{{\cal M}} \nc{\cN}{{\cal N}} \nc{\cO}{{\cal O}}
\nc{\cP}{{\cal P}} \nc{\cQ}{{\cal Q}} \nc{\cR}{{\cal R}}
\nc{\cS}{{\cal S}} \nc{\cT}{{\cal T}} \nc{\cU}{{\cal U}}
\nc{\cV}{{\cal V}} \nc{\cW}{{\cal W}} \nc{\cX}{{\cal X}}
\nc{\cZ}{{\cal Z}}

% \cX denotes a set, etc in mathematical definition.

\nc{\hA}{{\hat{A}}} \nc{\hB}{{\hat{B}}} \nc{\hC}{{\hat{C}}}
\nc{\hD}{{\hat{D}}} \nc{\hE}{{\hat{E}}} \nc{\hF}{{\hat{F}}}
\nc{\hG}{{\hat{G}}} \nc{\hH}{{\hat{H}}} \nc{\hI}{{\hat{I}}}
\nc{\hJ}{{\hat{J}}} \nc{\hK}{{\hat{K}}} \nc{\hL}{{\hat{L}}}
\nc{\hM}{{\hat{M}}} \nc{\hN}{{\hat{N}}} \nc{\hO}{{\hat{O}}}
\nc{\hP}{{\hat{P}}} \nc{\hR}{{\hat{R}}} \nc{\hS}{{\hat{S}}}
\nc{\hT}{{\hat{T}}} \nc{\hU}{{\hat{U}}} \nc{\hV}{{\hat{V}}}
\nc{\hW}{{\hat{W}}} \nc{\hX}{{\hat{X}}} \nc{\hZ}{{\hat{Z}}}

\nc{\hn}{{\hat{n}}}

%3=math symbol, personal

%3.1 tensor rank

% canonical decomposition, namely the convex sum of r product states

% canonical decomposition over the real field

% symmetric canonical decomposition, namely the convex sum of r symmetric product states

% symmetric canonical decomposition over the real field

% orthogonal canonical decomposition, namely the convex sum of r orthogonal product states

% strong orthogonal canonical decomposition, namely the convex sum of r locally orthogonal product states

%rk=tensor rank with canonical decomposition

%rk=tensor rank with real canonical decomposition

%srk=symmetric tensor rank with symmetric canonical decomposition

%srk=symmetric tensor rank with real canonical decomposition

%rrk=regularized tensor rank

%rsrk=regularized symmetric tensor rank

%grk=generic tensor rank equal to the tensor rank of most tensors in the space; there is only one grk

%trk=typical tensor rank equal to the tensor rank of a part of tensors in the space; there may exist a few different trk

%ark=asymmetric tensor rank, where decomposition contains at least one asymmetric product states

%brk=border tensor rank

%bsrk=symmetric border tensor rank

%ork=orthogonal tensor rank

%sork=strong orthogonal tensor rank

%3.2 general

%birank=(rank,rank^\G)

%cps=closest product state in the geometric measure of entanglement

%cps=closest separable state in the geometric measure of entanglement

%csd=canonical separable decomposition, i.e., reaching the length

\def\dim{\mathop{\rm Dim}}

%EV=eigenvalue
\def\ghz{\mathop{\rm GHZ}}

%Loc=local CPTP map

\def\min{\mathop{\rm min}}

%pr=polynomial rank in algebraic geometry for symmetric states

%pro=product states

%sd=separable decomposition

%sr=Schmidt rank

\newcommand{\tr}{{\rm Tr}}

\def\oa{\mathop{\rm OA}}

\def\iro{\mathop{\rm IrOA}}

\def\ame{\mathop{\rm AME}}

%3.3 abbreviation

%4=math symbol, default

\newcommand{\bra}[1]{\langle#1|}
\newcommand{\ket}[1]{|#1\rangle}

\newcommand{\ketbra}[2]{|#1\rangle\!\langle#2|}
\newcommand{\braket}[2]{\langle#1|#2\rangle}

\newcommand{\fl}[2]{\lfloor\frac{#1}{#2}\rfloor}

%5=color

% open questions

% suspicious result or derivation

%6=journal

%\newcommand{\pra}{Phys. Rev. A~}

% APS journals, such as jmo, pra, prl, rmp etc are defined by default.

%7=To make unique the abbreviation for the title of parts and sections, we follow the rules:

%a. Put "q" ahead of the word of quantum physics sections, e.g., %physics=qphysics;

%b. Put "c" ahead of the word of computer sections, e.g., %NPCvsNP=cNPCvsNP;

%c. Put "m" ahead of the word of mathematics sections, e.g., %matrix=mmatrix;

%d. When there are identical abbreviations, such as quantum operations, quantum operations and entanglement,
%quantum operations and distinguishing, mark them as qoperations, qoentanglement, qodistingushing respectively. In other words,
%take the first alphabet of the first n words in turn.

%8=To make unique the abbreviation for the references, we follow the rules:

%a. When there are only one name, take the last name and year, e.g., Lin Chen 2011=chen11;

%b. When there are two names take the initial alphabets of both last names and year, e.g.,
%Lin Chen and Huangjun Zhu, 2011=cz11;

%c. When there are three or more names, take the initial alphabets of both last names and year, e.g.,
%Lin Chen, Huangjun Zhu, and Tzu-Chieh Wei, 2011=czw11;

%d. When there are identical abbreviations, put the publication name in the end; e.g., hhh00PRL and hhh00PRA;

%e. When the reference is a book, put "book" in the end; e.g., harris92book.

\def\Dbar{\leavevmode\lower.6ex\hbox to 0pt
	{\hskip-.23ex\accent"16\hss}D}
%\author {{Dragomir {\v{Z} \Dbar}okovi{\'c}}}
%\affiliation{Department of Pure Mathematics and Institute for
%Quantum Computing, University of Waterloo, Waterloo, Ontario, N2L 3G1, Canada}\email{djokovic@uwaterloo.ca}
%\usepackage{xcolor}
%\definecolor{cream}{RGB}{203, 237, 204}
%\pagecolor{cream!90}

\begin{document}
	
	\title{$k$-Uniform states and quantum information masking}
	
%\date{\today}

%\pacs{03.65.Ud, 03.67.Mn}

\author{Fei Shi}
\email[]{shifei@mail.ustc.edu.cn}
\affiliation{School of Cyber Security,
	University of Science and Technology of China, Hefei, 230026, People's Republic of China.}

\author{Mao-Sheng Li}\email[]{li.maosheng.math@gmail.com}
\affiliation{Department of Physics, Southern University of Science and Technology, Shenzhen 518055, China}
\affiliation{Department of Physics, University of Science and Technology of China, Hefei 230026, China}

\author{Lin Chen}
\email[]{linchen@buaa.edu.cn}
\affiliation{School of Mathematical Sciences, Beihang University, Beijing 100191, China}
\affiliation{International Research Institute for Multidisciplinary Science, Beihang University, Beijing 100191, China}

\author{Xiande Zhang}
\email[]{Corresponding author: drzhangx@ustc.edu.cn}
\affiliation{School of Mathematical Sciences,
	University of Science and Technology of China, Hefei, 230026, People's Republic of China}

\begin{abstract}
A pure state of $N$ parties with local dimension $d$ is called a $k$-uniform
state if  all the reductions to $k$ parties are maximally mixed. Based on the connections among $k$-uniform states, orthogonal arrays and linear codes, we give general constructions for $k$-uniform states. We show that when $d\geq 4k-2$ (resp. $d\geq 2k-1$) is a prime power, there exists a $k$-uniform state for any $N\geq 2k$ (resp. $2k\leq N\leq d+1$). Specially, we give the existence of $4,5$-uniform states for almost every $N$-qudits. Further, we generalize the concept of quantum information masking in bipartite systems given by [Modi \emph{et al.} \href{https://journals.aps.org/prl/abstract/10.1103/PhysRevLett.120.230501}{Phys. Rev. Lett. \textbf{120}, 230501 (2018)}] to $k$-uniform quantum information masking in multipartite systems, and we show that $k$-uniform states and quantum error-correcting codes can be used for $k$-uniform quantum information masking.

\end{abstract}
	%\pacs{03.65.Ud, 03.67.Mn}
\maketitle

\renewcommand\arraystretch{1}
\begin{table*}[htbp]
	\caption{Existence of $4$-uniform states of $N$ subsystems with  local dimension $d\geq 2$.}
	\centering\label{table:4uni}
	\renewcommand\tabcolsep{10.0pt}
	\begin{tabular}{c|ccccccccc}
		\midrule[1.1pt]	
		$d \diagdown N$  &8         &9          &10        &11          &12        &13        &14       &15        &$N\geq 16$ \\
		\hline
		2                &$\times$  &$\times$   &$\times$  &?           &$\surd$   &$\surd$   &$\surd$  &$\surd$   &$\surd$ \\
		3                &$\times$  &$\surd$    &$\surd$   &$\surd$     &$\surd$   &$\surd$   &$\surd$  &$\surd$   &$\surd$ \\
		4,12                &?         &$\surd$    &$\surd$   &$\surd$     &$\surd$   &$\surd$   &$\surd$  &$\surd$   &$\surd$ \\
		6,10                &?         &?          &?         &?           &$\surd$   &$\surd$   &$\surd$  &$\surd$   &$\surd$ \\
		$d\geq 5$ is a prime power &$\surd$ &$\surd$ &$\surd$ &$\surd$ &$\surd$ &$\surd$   &$\surd$  &$\surd$   &$\surd$ \\
		$d\geq 14$ is not a prime power  &?   &?    &?   &?     &$\surd$   &$\surd$   &$\surd$  &$\surd$   &$\surd$ \\
		\midrule[1.1pt]
	\end{tabular}
\end{table*}

\begin{table*}[htbp]
	\caption{Existence of $5$-uniform states of $N$ subsystems with  local dimension $d\geq 2$.}
	\centering\label{table:5uni}
	\renewcommand\tabcolsep{10.0pt}
	\begin{tabular}{c|ccccccccc}
		\midrule[1.1pt]	
		$d \diagdown N$     &10        &11          &12        &13        &14       &15  &16         &17       &$N\geq 18$ \\
		\hline
		2                &$\times$  &$\times$   &?  &?        &?  &?   &$\surd$  &?  &$\surd$ \\
		3,15                &$\surd$  &?     &$\surd$  &?     &$\surd$   &$\surd$   &$\surd$  &$\surd$   &$\surd$ \\
		4,12                &$\surd$  &?     &$\surd$  &?     &$\surd$   &?   &$\surd$  &$\surd$   &$\surd$ \\
		5               &$\surd$   &?    &$\surd$   &$\surd$     &$\surd$   &$\surd$   &$\surd$  &$\surd$   &$\surd$ \\
		6,10,14                &?         &?          &?         &?           &?   &?   &$\surd$  &?   &$\surd$ \\
		$d\geq 7$ is a prime power &$\surd$ &$\surd$ &$\surd$ &$\surd$ &$\surd$ &$\surd$   &$\surd$  &$\surd$   &$\surd$ \\
		$d\geq 18$ is not a prime power   &?       &?       &?       &?    &?         &?   &$\surd$  &?   &$\surd$  \\
		\midrule[1.1pt]
	\end{tabular}
\end{table*}

\section{Introduction}
\label{sec:int}

Multipartite entanglement has many applications in quantum information such as quantum teleportation \cite{bennett1993teleporting,bouwmeester1997experimental}, quantum key distribution \cite{ekert1991quantum,gisin2002quantum,bennett1992quantum}, superdense coding \cite{bennett1992communication} and quantum error correcting codes \cite{scott2004multipartite}. However, it is difficult to  quantify the level of entanglement in an arbitrary multipartite system. A very striking class of pure states called \emph{absolutely maximally entangled} (AME) states  has attracted much attention in recent years. These states are maximally entangled on every bipartition. AME states can be used to design holographic
quantum codes and perfect tensors \cite{pastawski2015holographic}. AME states can also be used  for threshold quantum secret sharing schemes, for parallel and open-destination teleportation protocols \cite{helwig2012absolute,helwig2013absolutely}. However, AME states are very rare for given local dimensions \cite{AMEtable}. Taking qubits as an example, AME states exist only for  $2$-, $3$-, $5$-, and $6$-qubits
\cite{scott2004multipartite,rains1999quantum,huber2017absolutely}.

A much more general concept is called $k$-uniform
states, which is defined to be a multipartite pure
state of $N$-qudits whose all reductions to $k$ parties are maximally mixed \cite{scott2004multipartite}.
%For a $k$-uniform states in $\otimes_{\ell=1}^{N}\cH_{A_\ell}:=(\bbC^d)^{\otimes N}$, the $k$ must be no larger than $\fl{N}{2}$  by the bipartite state Schmidt decomposition.
AME states are special kinds of $k$-uniform states, more exactly,  the $\fl{N}{2}$-uniform states. Goyeneche \emph{et al.} first associated the $k$-uniform states with orthogonal arrays \cite{goyeneche2014genuinely}.  Moreover,  $k$-uniform states can also be  constructed from  Latin squares, symmetric matrices, graph states, quantum error correcting codes and classical error correcting codes   \cite{goyeneche2014genuinely,feng2017multipartite,goyeneche2015absolutely,scott2004multipartite,goyeneche2018entanglement,helwig2013absolutelygraph}. Very recently, the authors in Ref. \cite{raissi2019constructing} derived a new method to construct $k$-uniform states.  Although there are several methods to construct $k$-uniform states, a complete solution to the existence of  $k$-uniform states   is far from reach. It has been known that a $1$-uniform state exists for any $d\geq 2$ and $N\geq 2$, by the existence of $\ghz$ states.  Pang \emph{et al.} showed the existence of $2,3$-uniform states for almost every $N$-qudits \cite{pang2019two}.
However, there are a few results on the existence of  $k$-uniform states of $N$-qudits when $k\geq 4$. In this paper, we shall give general constructions of $k$-uniform states for $k\geq 4$ by using the linear codes.

Recently, Modi \emph{et al} proposed the concept of quantum information masking \cite{modi2018masking}.  This is a physical process that encodes quantum information into a bipartite system, while the information is completely unknown to each local system. They obtained a no-masking theorem:  an arbitrary quantum state cannot be masked. In \cite{li2019k-uniform}, the authors showed that quantum information masking in multipartite systems is possible.  In their masking protocol, it also required that the original information is inaccessible to each local system. However, collusion between
some subsystems would then reveal the encoded quantum
information \cite{modi2018masking}. Thus, a stronger version of quantum information masking is desirable. In this paper, we propose the concept of $k$-uniform quantum information masking in multipartite systems. It requires that the original information is  inaccessible to each $k$ subsystems. Specially, we refer to the $\fl{N}{2}$-uniform quantum information masking as the strong quantum information masking. We show that  if there exists a $(k+1)$-uniform state of $(N+1)$-qudits, then all states of one-qudit can be $k$-uniformly masked in an $N$-qudits system. We also show that when $N$ is odd, all states of one-qudit can be strongly masked in an $N$-qudits system provided that an AME state of $(N+1)$-qudits exists. However, when $N$ is even, we show that the strong quantum information masking is impossible, as a generalized no-masking theorem.  In the $k$-uniform quantum information masking scheme, if the reduction states of $k$ parties are proportional to identity, we show that a pure $((N,d,k+1))_d$ quantum error-correcting code is equivalent to that all states of one-qudit can be $k$-uniformly masked in an $N$-qudits system.

The rest of this paper is organized as follows. In Sec.~\ref{sec:k-uniform}, we introduce the connections among $k$-uniform states, orthogonal arrays and linear codes. We also review the existence of $1,2,3$-uniform states. In Sec.~\ref{sec:construction}, by using linear codes, we show that when $d\geq 4k-2$ (resp. $d\geq 2k-1$) is a prime power, there exists a $k$-uniform state in $(\bbC^d)^{\otimes N}$ for any $N\geq 2k$ (resp. $2k\leq N\leq d+1$).  Specially, we give the existence of $4,5$-uniform states for almost every $N$-qudits (see Table~\ref{table:4uni} and Table~\ref{table:5uni}),  by using
the technique derived in Sec.~\ref{sec:construction} and some previously known results.
In Sec.~\ref{sec:k-uniformmasking}, we propose the concept of $k$-uniform quantum information masking in multipartite systems, and we show that $k$-uniform states and quantum error-correcting codes can be used for $k$-uniform quantum information masking. Finally, we conclude in Sec.~\ref{sec:conclusion}.

%the  Although there are some constructions for $k$-uniform states, there are few applications for $k$-uniform states.  We show that $k$-uniform states can be used for quantum information masking.
 %For a $k$-uniform states in $\otimes_{\ell=1}^{N}\cH_{A_\ell}:=(\bbC^d)^{\otimes N}$, . When $k=\fl{N}{2}$, a $\fl{N}{2}$-uniform state is called .

\section{$k$-uniform states, orthogonal arrays and linear codes}\label{sec:k-uniform}
In this section, we introduce the preliminary knowledge and facts. First, we introduce the concept of $k$-uniform states \cite{scott2004multipartite}.

\begin{definition}
	A $k$-uniform state $\ket{\psi}$ in $\otimes_{\ell=1}^{N}\cH_{A_\ell}$ with $\dim\cH_{A_\ell}=d$ ($\ell=1,\ldots,N$), has the property that all reductions to $k$ parties are maximally mixed. That is,	for any subset $\cB=\{A_{j_1},A_{j_2},\ldots,A_{j_k}\}\subset \{A_1,A_2,\ldots,A_N\}$,
	\begin{equation}
	\rho_{\cB}=\tr_{\cB^c}\ketbra{\psi}{\psi}=\frac{1}{d^k}I_{\cB},
	\end{equation}
	where   $\cB^c$ denotes the set
	$$\{A_1,A_2,\ldots,A_N\}\setminus \cB,$$  $\tr_{\cB^c}$ is the partial trace operation, and $I_{\cB}$ denotes the identity operation acting on the Hilbert space $\otimes_{\ell=1}^{k}\cH_{A_{j_\ell}}$.
\end{definition}
%\begin{definition}
%	A $k$-uniform state $\ket{\psi}$ in $\otimes_{\ell=1}^{N}\cH_{A_\ell}$ with $\dim\cH_{A_\ell}=d$ ($\ell=1,\ldots,N$), has the property that all reductions to $k$ parties are maximally mixed. That is,	for any subset $\{A_{j_1},A_{j_2},\ldots,A_{j_k}\}\subset \{A_1,A_2,\ldots,A_N\}$,
%	\begin{equation}
%	\begin{aligned}
%	\rho_{\{A_{j_1},A_{j_2},\ldots,A_{j_k}\}}&=\tr_{\{A_{j_1},A_{j_2},\ldots,A_{j_k}\}^c}\ketbra{\psi}{\psi}\\
%	&=\frac{1}{d^k}I_{\{A_{j_1},A_{j_2},\ldots,A_{j_k}\}},
%	\end{aligned}
%	\end{equation}
%    where   $\{A_{j_1},A_{j_2},\ldots,A_{j_k}\}^c$ denotes the set
%	$$\{A_1,A_2,\ldots,A_N\}\setminus \{A_{j_1},A_{j_2},\ldots,A_{j_k}\},$$  $\tr_{\{A_{j_1},A_{j_2},\ldots,A_{j_k}\}^c}$ is the partial trace operation, and $I_{\{A_{j_1},A_{j_2},\ldots,A_{j_k}\}}$ denotes the identity operation acting on the Hilbert space $\otimes_{\ell=1}^{k}\cH_{A_{j_\ell}}$.
%\end{definition}

\begin{table*}[htbp]
	\renewcommand\arraystretch{1.7}	
	\caption{Existence of $1,2,3$-uniform states of $N$ subsystems with  local dimension $d\geq 2$.}\label{Table:k-unfiormresults}
	\centering
	\renewcommand\tabcolsep{5pt}
	\begin{tabular}{ccccc}
		\midrule[1.1pt]
		$(\bbC^d)^{\otimes N}$  &Existence  &Nonexistence &Unknown   &References  \\
		\hline
		$1$-uniform  &$d\geq 2, N\geq 2$ &no  & no &\cite{goyeneche2014genuinely}  \\
		\hline
		$2$-uniform  &$d\geq 2,N\geq 4$ except $d=2,6$, $N=4$ &$d=2, N=4$ & $d=6$, $N=4$ &\cite{goyeneche2014genuinely,scott2004multipartite,li2019k-uniform,pang2019two,rains1999nonbinary,highu}  \\
		\hline
		\multirow{2}*{$3$-uniform} &\multirow{2}*{$d\geq 2$, $N\geq 6$,}  &\multirow{2}*{$d=2, N=7$} &\multirow{2}*{$d\geq 6$,  $d=2 \pmod 4$,}  &\multirow{2}*{\cite{li2019k-uniform,rains1999nonbinary,huber2017absolutely,helwig2013absolutely,raissi2019constructing,grassl2015quantum,pang2019two}}\\
		&except $d=2 \pmod 4$, $N=7$& &$N=7$ &\\
		\midrule[1.1pt]	
	\end{tabular}
\end{table*}

We denote $\otimes_{\ell=1}^{N}\cH_{A_\ell}$ as $(\bbC^d)^{\otimes N}$ for simplicity.  Due to the Schmidt decomposition of bipartite pure state, $k$ satisfies $k\leq \fl{N}{2}$.  If $k=\fl{N}{2}$, then $\ket{\psi}$ is also called an absolutely maximally entangled ($\ame$) state. A $k$-uniform state in $(\bbC^d)^{\otimes N}$ corresponds to a pure $((N,1,k+1))_d$ quantum error-correcting code (QECC)  \cite{scott2004multipartite}. See Appendix~\ref{appendix:QECC} for the definition of QECCs. For a pure $((N,1,k+1))_{d_1}$ code $\mathcal{C}_1$ and a pure $((N,1,k+1))_{d_2}$ code $\mathcal{C}_2$, the tensor product $\mathcal{C}_1\otimes \mathcal{C}_2$ is a pure $((N,1,k+1))_{d_1d_2}$ code \cite{rains1999nonbinary}. By the relation between $k$-uniform states and QECCs, we have the following lemma.

\begin{lemma}\label{lem:rec}
Let $\ket{\psi}_{A_1,A_2,\ldots, A_N}$  and $\ket{\phi}_{B_1,B_2,\ldots, B_N}$ be $k$-uniform states in $(\bbC^{d_1})^{\otimes N}$  and   $(\bbC^{d_2})^{\otimes N}$ respectively. Then the tensor product of them, i.e.,   $$\ket{\varphi}_{(A_1B_1),(A_2B_2),\ldots,(A_NB_N)}=\ket{\psi}_{A_1,A_2,\ldots ,A_N}\otimes\ket{\phi}_{B_1,B_2,\ldots B_N}$$ is a $k$-uniform state in $(\bbC^{d_1d_2})^{\otimes N}$.
\end{lemma}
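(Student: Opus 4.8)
The plan is to exploit the fact, quoted just above the statement, that a $k$-uniform state in $(\bbC^d)^{\otimes N}$ is the same object as a pure $((N,1,k+1))_d$ QECC, together with the multiplicativity of pure codes under tensor products. However, it is cleaner and entirely self-contained to verify the $k$-uniform property of $\ket{\varphi}$ directly, so I would lead with that and keep the code-theoretic argument as a one-line alternative.

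First I would fix an arbitrary subset $\cB=\{(A_{j_1}B_{j_1}),\ldots,(A_{j_k}B_{j_k})\}$ of $k$ of the composite parties, and write $\cB_A=\{A_{j_1},\ldots,A_{j_k}\}$ and $\cB_B=\{B_{j_1},\ldots,B_{j_k}\}$ for the corresponding sets of $A$- and $B$-subsystems. The central observation is that, since $\ket{\varphi}=\ket{\psi}\otimes\ket{\phi}$ is a product across the $A$-register and the $B$-register, the global projector factorizes as $\ketbra{\varphi}{\varphi}=\ketbra{\psi}{\psi}\otimes\ketbra{\phi}{\phi}$, and the partial trace over the complement $\cB^c$ splits into a partial trace over the $A$-subsystems outside $\cB_A$ and an independent partial trace over the $B$-subsystems outside $\cB_B$. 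Hence
\[\rho_{\cB}=\tr_{\cB^c}\ketbra{\varphi}{\varphi}=\left(\tr_{\cB_A^c}\ketbra{\psi}{\psi}\right)\otimes\left(\tr_{\cB_B^c}\ketbra{\phi}{\phi}\right).\]
Now I invoke the hypothesis that $\ket{\psi}$ and $\ket{\phi}$ are $k$-uniform: the two factors equal $\frac{1}{d_1^k}I_{\cB_A}$ and $\frac{1}{d_2^k}I_{\cB_B}$ respectively. Their tensor product is $\frac{1}{(d_1d_2)^k}I_{\cB}$, the maximally mixed state on $k$ parties of local dimension $d_1d_2$. Since $\cB$ was an arbitrary $k$-element subset, this is exactly the defining property of a $k$-uniform state in $(\bbC^{d_1d_2})^{\otimes N}$.

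The only genuine subtlety, and the step I would be most careful about, is the bookkeeping of tensor factors. As written, $\ket{\varphi}$ lives on the ordered registers $A_1\cdots A_N B_1\cdots B_N$, whereas the composite parties are the pairs $(A_\ell B_\ell)$; passing between the two requires a fixed permutation of tensor slots. One must check that this relabeling commutes with taking the partial trace over $\cB^c$ and carries $I_{\cB_A}\otimes I_{\cB_B}$ to the identity $I_{\cB}$ on $\bigotimes_{\ell}(\cH_{A_{j_\ell}}\otimes\cH_{B_{j_\ell}})$; both are immediate, since partial traces over disjoint registers commute and the identity operator is invariant under regrouping. As a shortcut avoiding the whole computation, one may instead note that $\ket{\psi}$ and $\ket{\phi}$ are pure $((N,1,k+1))_{d_1}$ and $((N,1,k+1))_{d_2}$ codes, that $\ket{\varphi}$ is precisely their code tensor product, hence a pure $((N,1,k+1))_{d_1d_2}$ code by the result of Ref.~\cite{rains1999nonbinary}, and therefore a $k$-uniform state in $(\bbC^{d_1d_2})^{\otimes N}$.
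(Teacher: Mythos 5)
Your proof is correct, but your primary argument takes a different route from the paper's. The paper does not verify the reduction condition directly: it justifies the lemma entirely through the correspondence between $k$-uniform states in $(\bbC^d)^{\otimes N}$ and pure $((N,1,k+1))_d$ quantum error-correcting codes, combined with the cited fact that the tensor product of a pure $((N,1,k+1))_{d_1}$ code and a pure $((N,1,k+1))_{d_2}$ code is a pure $((N,1,k+1))_{d_1d_2}$ code --- exactly the ``shortcut'' you relegate to your final sentence. Your lead argument instead computes the reduced state directly, using that $\ketbra{\varphi}{\varphi}=\ketbra{\psi}{\psi}\otimes\ketbra{\phi}{\phi}$ factorizes the partial trace over $\cB^c$ into independent partial traces on the $A$- and $B$-registers, so that $\rho_{\cB}$ is the tensor product of two maximally mixed states and hence maximally mixed on local dimension $d_1d_2$. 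What your approach buys is self-containedness: it needs neither the QECC dictionary nor the external result of Rains, and your attention to the regrouping of tensor slots (pairing $A_\ell$ with $B_\ell$) makes explicit a point the paper leaves implicit. What the paper's approach buys is brevity and consistency with the code-theoretic framework it uses throughout. Both arguments are sound.
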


%See Appendix~\ref{appendix:rec} for a  direct proof of  Lemma~\ref{lem:rec}.
 Lemma~\ref{lem:rec} provides a simple but very  useful construction. If we can construct  a $k$-uniform state in $(\bbC^{d})^{\otimes N}$ for any prime $d$, then we can construct a $k$-uniform state in $(\bbC^{d})^{\otimes N}$ for any local dimension $d$. Since $\ghz$ states are  $1$-uniform states, it means that a $1$-uniform state in $(\bbC^d)^{\otimes N}$ exists for any $d\geq 2$ and $N\geq 2$.   In \cite{pang2019two}, there are several unsolved cases for the existence of $2$-uniform states and $3$-uniform states.  The unsolved cases for $2$-uniform states in $(\bbC^d)^{\otimes N}$ are  $d=2,6$ and $N=4$; $d=10$ and $N=5$.  By \cite{highu}, we know that $2$-uniform states in $(\bbC^2) ^{\otimes 4}$ do not exist.  By using Lemma~\ref{lem:rec}, we can construct a $2$-uniform state in  $(\bbC^{10})^{\otimes 5}$ from a  $2$-uniform state in  $(\bbC^{2})^{\otimes 5}$  and a  $2$-uniform state in  $(\bbC^{5})^{\otimes 5}$. Thus the only unsolved case for $2$-uniform states in $(\bbC^d)^{\otimes N}$ is just $d=6$ and $N=4$.  The unsolved cases for $3$-uniform states in $(\bbC^d)^{\otimes N}$ are that $d\geq 6$ is not a prime power and $N=9,10,11$; $d=2,3$, $d\geq 6$ is not a prime power and $N=6$; $d=2,3,4,5$, $ d\geq 6$ is not a prime power and $N=7$.  By using Lemma~\ref{lem:rec}, we can construct a  $3$-uniform state in $(\bbC^{d})^{\otimes N}$  for any $d\geq 6$ not a prime power and $N=9,10,11$. A $3$-uniform state in $(\bbC^{d})^{\otimes 6}$ exists for any $d\geq 2$ by \cite{rains1999nonbinary}. Since $3$-uniform states in  $(\bbC^{2})^{\otimes 7}$ do not exist \cite{huber2017absolutely}, and there exists a $3$-uniform state in  $(\bbC^{d})^{\otimes 7}$ for any $d=3,5$ by \cite{AMEtable}, $d=4$ by \cite{raissi2019constructing}, we know that  a $3$-uniform state  in $(\bbC^{d})^{\otimes 7}$ exists for any $d\geq 2$ except for $d=2(2k+1)$ and $k\geq 1$ by Lemma~\ref{lem:rec}.  Thus the left unsolved cases for $3$-uniform states  in  $(\bbC^{d})^{\otimes N}$ are $d\geq 6, d=2 \pmod 4$ and $N=7$. See Table~\ref{Table:k-unfiormresults} for a summary of $1,2,3$-uniform states.

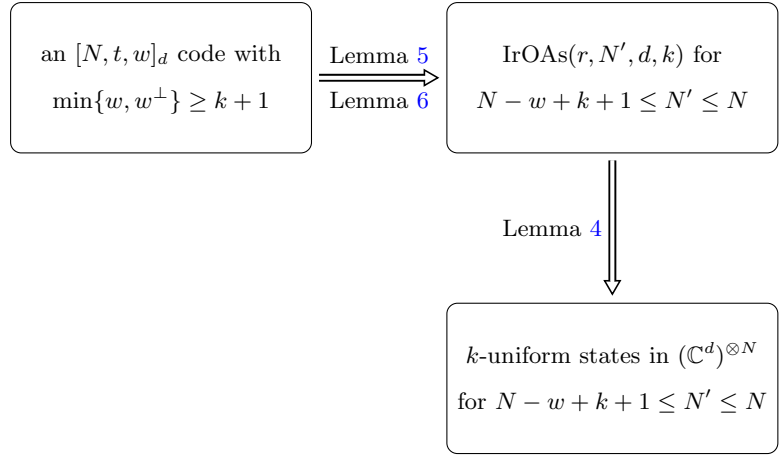
\begin{figure*}[t]
	\begin{tikzpicture}
	\draw[rounded corners] (0,0) rectangle (4,2);
	\draw (2,1.3) node []{an $[N,t,w]_d$ code with };
	\draw (2,0.7) node []{$\min\{w,w^{\bot}\}\geq k+1$};
	
	\draw[rounded corners] (5.8,0) rectangle (10.2,2);
	\draw (8,1.3) node []{IrOAs$(r,N',d,k)$ for};
	\draw (8,0.7) node []{ $N-w+k+1\leq N'\leq N$};
	
	\draw[rounded corners] (5.8,-4) rectangle (10.2,-2);
	\draw (8,-2.7) node []{$k$-uniform states in $(\bbC^d)^{\otimes N}$};
	\draw (8,-3.3) node []{for $N-w+k+1\leq N'\leq N$};

	\draw [vecArrow](4.1,1)--(5.7,1);
	\draw (4.9,1.3) node []{Lemma \ref{dis}};
	\draw (4.9,0.7) node []{Lemma \ref{lem:codearray}};
	\draw [vecArrow](8,-0.1)--(8,-1.9);
	\draw (7.2,-1) node []{Lemma~\ref{oakui}};
	\end{tikzpicture}
	\caption{The main method  of constructing $k$-uniform states in this paper. }\label{Fig:codeiroa}
\end{figure*}

Orthogonal arrays are essential in statistics and have wide applications in
computer science and cryptography.  An $r\times N$ array $A$ with entries taken from a set $S$ with $d$ elements is said to be an \emph{orthogonal array } \cite{hedayat1999orthogonal} with $r$ runs, $N$ factors, $d$ levels, strength $k$, and index $\lambda$, denoted by $\oa(r,N,d,k)$, if every $r\times k$ subarray of $A$ contains each $k$-tuple of symbols from $S$ exactly $\lambda$ times as a row.
It is easy to see that an $\oa(r,N,d,k)$ must be an $\oa(r,N,d,k-1)$, and it is an $\oa(r,N-n,d,k)$ when we delete any $n$ columns.

\begin{example}\label{OA9432}
	\begin{equation*}
	\left(
	\begin{matrix}
	0 &0 &0	&1 &1 &1 &2 &2 &2\\
	0 &1 &2	&0 &1 &2 &0 &1 &2\\
	0 &1 &2	&2 &0 &1 &1 &2 &0\\
	0 &1 &2 &1 &2 &0 &2 &0 &1\\
	\end{matrix}
	\right)^{\mathrm{T}}
	\end{equation*}
	is an $\oa(9,4,3,2)$.
\end{example}

An orthogonal array $\oa(r,N,d,k)$ is called \emph{irredundant} \cite{goyeneche2014genuinely}, denoted  by $\iro(r,N,d,k)$, if after removing from the array any $k$ columns all remaining $r$ rows, containing $N-k$ symbols each, are different. It is easy to see that the $\oa(9,4,3,2)$ in Example~\ref{OA9432} is an $\iro(9,4,3,2)$.

\begin{proposition}\label{oakui}\cite{goyeneche2014genuinely}
For any $\iro(r,N,d,k)$ say  $ (a_{ij})_{r\times N}$  with $a_{ij}\in \bbZ_d$,  the $N$ particles   state  $|\psi\rangle$    defined by  $|\psi\rangle:=\frac{1}{\sqrt{r}}\sum_{i=1}^{r}|a_{i1}a_{i2}\cdots a_{iN}\rangle$ is a $k$-uniform state in $(\bbC^{d})^{\otimes N}$.
\end{proposition}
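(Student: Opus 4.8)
The plan is to compute directly the $k$-party reduced density matrix of $\ket{\psi}$ for an arbitrary choice of $k$ parties and to show it equals $\frac{1}{d^k}I_{\cB}$. By the symmetry of the argument it suffices to treat $\cB=\{A_1,\ldots,A_k\}$ with complement $\cB^c=\{A_{k+1},\ldots,A_N\}$; any other $k$-subset is handled identically after relabelling columns. First I would write $\proj{\psi}=\frac1r\sum_{i,i'}\ketbra{a_{i1}\cdots a_{iN}}{a_{i'1}\cdots a_{i'N}}$ and take the partial trace over $\cB^c$. The partial trace converts the $\cB^c$-part of each dyad into the overlap $\braket{a_{i'(k+1)}\cdots a_{i'N}}{a_{i(k+1)}\cdots a_{iN}}=\prod_{j=k+1}^{N}\delta_{a_{ij},a_{i'j}}$, which equals $1$ exactly when rows $i$ and $i'$ agree on every coordinate indexed by $\cB^c$, and $0$ otherwise.

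The decisive step is to eliminate all off-diagonal ($i\neq i'$) contributions, and this is precisely where irredundancy enters. By the definition of $\iro(r,N,d,k)$, deleting the $k$ columns of $\cB$ leaves $r$ rows that are pairwise distinct; hence any two distinct rows $i\neq i'$ must differ in at least one coordinate of $\cB^c$, forcing the product of deltas above to vanish. Therefore only the diagonal terms survive, and $\rho_{\cB}=\frac1r\sum_{i=1}^{r}\proj{a_{i1}\cdots a_{ik}}$.

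It then remains to count how many rows carry each fixed $k$-tuple on the columns of $\cB$, and here I would invoke the orthogonal-array (strength-$k$) property: since the array is an $\oa(r,N,d,k)$, every $k$-tuple $(b_1,\ldots,b_k)\in\bbZ_d^{k}$ occurs as a row of the subarray on $\cB$ exactly $\lambda=r/d^{k}$ times. Summing gives $\rho_{\cB}=\frac{\lambda}{r}\sum_{(b_1,\ldots,b_k)\in\bbZ_d^k}\proj{b_1\cdots b_k}=\frac{\lambda}{r}I_{\cB}=\frac{1}{d^{k}}I_{\cB}$, which is exactly the defining condition for $k$-uniformity. The point worth stating carefully — and the only genuinely subtle part — is the equivalence that irredundancy is what makes the complementary registers orthogonal: the orthogonal-array strength alone pins down the correct diagonal entries but says nothing about the cross terms, so it is the irredundancy hypothesis, not the strength, that guarantees the reduction is genuinely maximally mixed rather than merely having the right diagonal.
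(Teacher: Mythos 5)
Your proof is correct. The paper itself gives no proof of Proposition~\ref{oakui} --- it is quoted from Ref.~\cite{goyeneche2014genuinely} --- but your argument is the standard one underlying that citation: the partial trace over $\cB^c$ produces the overlaps $\prod_{j}\delta_{a_{ij},a_{i'j}}$, irredundancy forces these to vanish for $i\neq i'$ so that only the diagonal survives, and the strength-$k$ property then makes the diagonal uniform with each $k$-tuple appearing $\lambda=r/d^{k}$ times. Your closing remark correctly isolates the division of labour between the two hypotheses, which is exactly the point of Lemma~\ref{dis} in the surrounding text.
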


By Example~\ref{OA9432} and Proposition \ref{oakui}, we can construct a $2$-uniform state in $(\bbC^{3})^{\otimes 4}$. In \cite{pang2019two}, the authors provided a way to check whether an orthogonal array is irredundant. For any two vectors $\bu$ and $\bv$ of length $N$, the \emph{Hamming distance} $d_H(\bu,\bv)$ is the  number of positions in which they differ. Given an $r\times N$ array $A$, the minimum distance of $A$, denoted by $d_H(A)$ is the minimum Hamming distance between any two distinct rows.

\begin{lemma}\label{dis}\cite{pang2019two}
	An $\oa(r,N,d,k)$ is irredundant if and only if its minimum distance is greater than $k$. The existence of an $\oa(r,N,d,k)$ with minimum distance $w\geq k+1$ implies that an $\iro(r,N',d,k)$ exists for any $N-w+k+1\leq N'\leq N$.
\end{lemma}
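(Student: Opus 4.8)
The plan is to prove the two assertions in turn, treating the equivalence in the first sentence as the core and then deducing the existence statement from it together with the elementary fact (already noted before the lemma) that deleting columns of an orthogonal array preserves its strength. Throughout I would use the trivial observation that deleting a fixed set of $k$ columns of $A$ is the same as restricting every row to the complementary $N-k$ coordinates.

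For the equivalence I would argue by contraposition in both directions. Suppose first that $d_H(A)\geq k+1$. If, after deleting some choice of $k$ columns, two rows $\bu,\bv$ coincided on the surviving $N-k$ coordinates, then $\bu$ and $\bv$ could differ only among the $k$ deleted coordinates, forcing $d_H(\bu,\bv)\leq k$ and contradicting $d_H(A)\geq k+1$. Hence no matter which $k$ columns are removed all $r$ rows stay distinct, so $A$ is irredundant. Conversely, suppose $d_H(A)\leq k$, witnessed by distinct rows $\bu,\bv$ with $d_H(\bu,\bv)\leq k$. I would then choose a set of exactly $k$ columns that contains all the (at most $k$) positions where $\bu$ and $\bv$ disagree; deleting these columns makes $\bu$ and $\bv$ agree on every remaining coordinate, so they become identical and $A$ is not irredundant. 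This establishes the first sentence.

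For the second sentence I would start from the $\oa(r,N,d,k)$ with $d_H(A)=w\geq k+1$ and, for a given $N'$ in the stated range, delete any $n:=N-N'$ columns. As recalled above, the result is again an $\oa(r,N',d,k)$; the inequalities $N-w+k+1\leq N'\leq N$ together with $w\leq N$ guarantee $N'\geq k+1$, so strength $k$ remains meaningful. Deleting a single column lowers the Hamming distance between any two fixed rows by at most one, so after removing $n$ columns the minimum distance of the reduced array is at least $w-n=w-(N-N')$. The constraint $N'\geq N-w+k+1$ is exactly the inequality $w-(N-N')\geq k+1$, so the reduced array has minimum distance greater than $k$, and applying the equivalence just proved shows it is irredundant, i.e.\ an $\iro(r,N',d,k)$ exists for every $N'$ in the range.

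The argument is essentially combinatorial bookkeeping, so I do not anticipate a genuine obstacle. The two points deserving care are the converse direction of the equivalence, where one must pad the disagreement set up to a full block of exactly $k$ columns (possible precisely because the rows disagree in at most $k$ positions), and the remark in the second part that the distance estimate $w-(N-N')\geq k+1$ holds for an \emph{arbitrary} choice of deleted columns, so no clever selection is required and existence over the whole interval is immediate.
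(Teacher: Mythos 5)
Your proof is correct; note that the paper itself gives no proof of this lemma, quoting it directly from Ref.~\cite{pang2019two}, so there is no in-paper argument to compare against. Your reasoning --- minimum distance $\geq k+1$ if and only if no two rows can be made to coincide by deleting $k$ columns (padding the disagreement set up to exactly $k$ columns in the converse), together with the facts that column deletion preserves strength and that deleting $n=N-N'$ columns lowers the minimum distance by at most $n$, so that $N'\geq N-w+k+1$ forces the reduced distance to be at least $k+1$ --- is the standard argument and is complete.
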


The set of rows of an $r\times N$ array $A$ over $\bbZ_d$ is often referred to a code $\mathcal{C}$ in $\bbZ_d^N$, where $\bbZ_d^N$ means $\overbrace{\bbZ_d \times\cdots\times\bbZ_d}^{N \text{ times}}$. If $d_H(A)=w$, then we say that the code $\mathcal{C}$ has minimum distance $w$, and denote it by an $(N,r,w)_d$ code. Further, if $d$ is a prime power, and $\mathcal{C}$ forms a subspace of $\bbF_d^N$ of dimension $t$ ($\bbF_d$ is a Galois Field with order $d$), then it is called a linear code, and denoted by an $[N,t,w]_d$ code. In this case, the dual code of $\mathcal{C}$ is defined by $\mathcal{C}^{\bot}=\{\bu\in \bbF_d^N | \bu^{\mathrm{T}}\cdot\bv=0, \forall \ \bv\in \mathcal{C}\}$, and the minimum distance of $\mathcal{C}^{\bot}$ is called the dual distance of $\mathcal{C}$, denoted by $w^{\bot}$. If $\mathcal{C}=\mathcal{C}^{\bot}$, then $\mathcal{C}$ is called \emph{self-dual}.  The next lemma gives the relation between linear codes and orthogonal arrays.

\begin{lemma}\cite{hedayat1999orthogonal}\label{lem:codearray}
	If $\mathcal{C}$ is an $[N,t,w]_d$ code with dual distance $w^{\bot}$, then the corresponding array is an $\oa(d^t,N,d,w^{\bot}-1)$.
\end{lemma}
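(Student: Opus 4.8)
The plan is to identify the array with the set of $d^t$ codewords of $\cC$ listed as rows, and to establish that it has strength $k:=w^{\bot}-1$ by showing that, for every choice of $k$ columns, the restriction of $\cC$ to those columns is balanced. Concretely, I would fix a set $T=\{i_1,\dots,i_k\}$ of $k$ distinct column indices and consider the coordinate-projection $\pi_T\colon \cC\to\bbF_d^{\,k}$ sending a codeword to its entries in the positions of $T$. Since $\cC$ is linear and $\pi_T$ is $\bbF_d$-linear, its image is a subspace of $\bbF_d^{\,k}$; the first step is to prove that this image is all of $\bbF_d^{\,k}$, i.e.\ that $\pi_T$ is surjective.

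The surjectivity would be argued by a duality/contrapositive argument. If $\pi_T$ were not surjective, its image would lie in a proper hyperplane, so there would be a nonzero functional $\ba\in\bbF_d^{\,k}$ with $\ba^{\mathrm T}\cdot\pi_T(\bc)=0$ for all $\bc\in\cC$. Extending $\ba$ to a vector $\tilde\ba\in\bbF_d^N$ that is zero outside $T$, this reads $\tilde\ba^{\mathrm T}\cdot\bc=0$ for every $\bc\in\cC$, so $\tilde\ba\in\cC^{\bot}$. But $\tilde\ba$ is nonzero and supported inside $T$, whence its Hamming weight is at most $|T|=k=w^{\bot}-1<w^{\bot}$, contradicting the fact that every nonzero codeword of $\cC^{\bot}$ has weight at least the dual distance $w^{\bot}$. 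Hence $\pi_T$ is surjective.

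Finally I would pass from surjectivity to the orthogonal-array count. Surjectivity of the linear map $\pi_T\colon\cC\to\bbF_d^{\,k}$ means every $k$-tuple in $\bbF_d^{\,k}$ is attained, and each fiber $\pi_T^{-1}(\bu)$ is a coset of $\ker\pi_T$, hence has the common cardinality $|\cC|/d^{\,k}=d^{\,t-k}$. Thus in the $d^t\times k$ subarray formed by the columns of $T$, every $k$-tuple of symbols occurs exactly $\lambda=d^{\,t-k}$ times as a row. Since $T$ was an arbitrary set of $k$ columns, the array has strength $k=w^{\bot}-1$ with index $d^{\,t-k}$, which is exactly the assertion that it is an $\oa(d^t,N,d,w^{\bot}-1)$.

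The main obstacle is the middle step: translating \emph{"the projection onto $k$ coordinates is balanced"} into \emph{"there is no nonzero dual codeword supported on those $k$ coordinates,"} and recognizing that a dual codeword supported on a set of size $<w^{\bot}$ must vanish. Once this equivalence between balancedness of coordinate projections and the support/weight structure of $\cC^{\bot}$ is in hand, the counting in the last step is routine, and the value $k=w^{\bot}-1$ emerges as precisely the largest $k$ for which no such low-weight dual codeword can exist.
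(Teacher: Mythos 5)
Your proof is correct. The paper does not prove this lemma at all — it is quoted from Hedayat, Sloane and Stufken's book — and your argument (the projection of $\cC$ onto any $k=w^{\bot}-1$ coordinates is surjective, since otherwise a nonzero functional vanishing on the image would extend by zeros to a nonzero codeword of $\cC^{\bot}$ of weight at most $w^{\bot}-1$, together with the coset count $|\cC|/d^{k}=d^{t-k}$ for the index) is precisely the standard proof of this classical code--orthogonal-array correspondence.
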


Example~\ref{OA9432} is a $[4,2,3]_3$ code with dual distance $3$, then we  obtain an $\oa(9,4,3,2)$ by Lemma~\ref{lem:codearray}. If an orthogonal array is constructed from a linear code, we call it  a \emph{linear} orthogonal array.   The strategy for constructing $k$-uniform states in this paper is as follows (see Fig.~\ref{Fig:codeiroa}). Using some known linear codes and the correspondence in Lemma~\ref{lem:codearray}, we can obtain an  orthogonal array.
Then by  computing the minimum distance of the   orthogonal array, we might obtain a series of irredundant orthogonal arrays by Lemma~\ref{dis}.
Finally, by Proposition~\ref{oakui}, we obtain a series of $k$-uniform states.
From Lemma~\ref{lem:codearray}, we can give a recursion for $\iro$s by the recursion for linear codes.

\section{Construction of $k$-uniform states }\label{sec:construction}
In this section, we give a recursion for $\iro$s in Lemma~\ref{lem;linirr}. Then we construct some $k$-uniform states by linear codes in Theorem~\ref{thm:dpri}. Further, we give the existence of $4,5$-uniform states in Tables~\ref{table:4uni} and \ref{table:5uni}. First, we give the key recursion for $\iro$s. For any two positive integers $a$ and $b$ ($a\leq b$), let $[a,b]$ denote the set $\{a,a+1,\ldots,b\}$.

\begin{lemma}\label{lem;linirr}
	If there exists a linear $\iro(r_1,N_1,d,k)$ and a linear $\iro(r_2,N_2,d,k)$, then there exists a linear $\iro(r_1r_2,N_1+N_2,d,k)$.
\end{lemma}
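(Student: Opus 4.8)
The plan is to pass from the two linear irredundant orthogonal arrays back to their defining linear codes, to combine these codes by the direct-sum (juxtaposition) construction, and then to translate the resulting code back into an $\iro$ via Lemmas~\ref{dis} and~\ref{lem:codearray}. The first observation is that, by these two lemmas read together, a linear $\iro(r,N,d,k)$ is exactly the orthogonal array attached to some $[N,t,w]_d$ code $\mathcal{C}$ with $d^t=r$ and $\min\{w,w^{\bot}\}\geq k+1$: the dual-distance condition $w^{\bot}\geq k+1$ yields strength at least $k$ through Lemma~\ref{lem:codearray}, while the minimum-distance condition $w\geq k+1$ is precisely irredundancy through Lemma~\ref{dis}. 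So I would first record codes $\mathcal{C}_1=[N_1,t_1,w_1]_d$ and $\mathcal{C}_2=[N_2,t_2,w_2]_d$ with $d^{t_i}=r_i$ and $\min\{w_i,w_i^{\bot}\}\geq k+1$ for $i=1,2$ (here $d$ is a prime power, as required for linearity).

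Next I would form the direct-sum code $\mathcal{C}=\mathcal{C}_1\oplus\mathcal{C}_2=\{(c_1,c_2): c_1\in\mathcal{C}_1,\ c_2\in\mathcal{C}_2\}\subseteq\bbF_d^{N_1+N_2}$, whose generator matrix is the block-diagonal matrix assembled from those of $\mathcal{C}_1$ and $\mathcal{C}_2$. Three standard facts then need to be checked: (i) $\mathcal{C}$ has dimension $t_1+t_2$, so $d^{t_1+t_2}=d^{t_1}d^{t_2}=r_1r_2$; (ii) the minimum distance of $\mathcal{C}$ equals $\min\{w_1,w_2\}$, since a nonzero codeword $(c_1,c_2)$ has weight $\mathrm{wt}(c_1)+\mathrm{wt}(c_2)$ and the minimum over nonzero codewords is attained by concentrating all the weight in a single block; and (iii) $\mathcal{C}^{\bot}=\mathcal{C}_1^{\bot}\oplus\mathcal{C}_2^{\bot}$, which follows by testing orthogonality against codewords supported in one block at a time, so that the dual distance of $\mathcal{C}$ equals $\min\{w_1^{\bot},w_2^{\bot}\}$.

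Combining (ii) and (iii) gives $\min\{w,w^{\bot}\}=\min\{w_1,w_2,w_1^{\bot},w_2^{\bot}\}\geq k+1$, so $\mathcal{C}$ is an $[N_1+N_2,\,t_1+t_2,\,\min\{w_1,w_2\}]_d$ code meeting the same threshold as the two inputs. Finally I would run the correspondence in the reverse direction: by Lemma~\ref{lem:codearray} the array of $\mathcal{C}$ is an $\oa(d^{t_1+t_2},N_1+N_2,d,w^{\bot}-1)$, hence an $\oa(r_1r_2,N_1+N_2,d,k)$ because $w^{\bot}-1\geq k$; and by Lemma~\ref{dis} its minimum distance $\min\{w_1,w_2\}\geq k+1>k$ certifies irredundancy. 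This is precisely a linear $\iro(r_1r_2,N_1+N_2,d,k)$, as claimed. I do not anticipate a genuine obstacle: the only points needing a little care are the dual-of-a-direct-sum identity in (iii) and the remark (made just after the definition of orthogonal arrays) that an array of strength exceeding $k$ is still an array of strength $k$, so that the strength $w^{\bot}-1\geq k$ does no harm; both are routine.
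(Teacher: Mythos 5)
Your proposal is correct and follows essentially the same route as the paper: translate each linear $\iro$ into an $[N_i,t_i,w_i]_d$ code with $\min\{w_i,w_i^{\bot}\}\geq k+1$ via Lemmas~\ref{dis} and~\ref{lem:codearray}, take the direct sum, note that its minimum distance and dual distance are $\min\{w_1,w_2\}$ and $\min\{w_1^{\bot},w_2^{\bot}\}$ respectively, and convert back. The only cosmetic difference is that you justify the dual-distance claim via the identity $\mathcal{C}^{\bot}=\mathcal{C}_1^{\bot}\oplus\mathcal{C}_2^{\bot}$, whereas the paper's appendix argues with the block-diagonal generator matrix and column independence; both are routine and equivalent.
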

\begin{proof}
		We claim that if  $\mathcal{C}_1$ is an $[N_1,t_1,w_1]_d$ code with dual distance $w_1^{\bot}$  and $\mathcal{C}_2$ is an $[N_2,t_2,w_2]_d$ code with dual distance $w_2^{\bot}$, then $\mathcal{C}_1\oplus \mathcal{C}_2=\{(\bc_1,\bc_2)|\bc_1\in\mathcal{C}_1,\bc_2\in\mathcal{C}_2\}$ is an $[N_1+N_2,t_1+t_2,\min\{w_1,w_2\}]_d$ code with dual distance  $\min\{w_1^{\bot},w_2^{\bot}\}$. See Appendix~\ref{appendix:prooflinear} for the proof of this claim. By Lemmas~\ref{dis} and \ref{lem:codearray}, we obtain that the linear $\iro(r_j,N_j,d,k)$  corresponds to an $[N_i,\text{log}_dr_j,w_j]_d$ code $\mathcal{C}_j$  with $\min\{w_j,w_j^{\bot}\}\geq k+1$ for each $j=1,2$. Then  $\mathcal{C}_1\oplus \mathcal{C}_2$ is an $[N_1+N_2,\text{log}_d(r_1r_2),\min\{w_1,w_2\}]_d$ code with dual distance  $\min\{w_1^{\bot},w_2^{\bot}\}$ by the above claim. Since the dual distance of this code is $\min\{w_1^{\bot},w_2^{\bot}\}\geq k+1$, this code corresponds to an $\oa(r_1r_2,N_1+N_2,d,k)$ with minimum distance $\min\{w_1,w_2\}\geq k+1$ by Lemma~\ref{lem:codearray}. Thus, it is an $\iro(r_1r_2,N_1+N_2,d,k)$ by Lemma~\ref{dis}.
\end{proof}
\vspace{0.4cm}
	
 By repeatedly using the above lemma, we can deduce that if there exists a linear  $\iro(r_1,N,d,k)$ for any $N\in[N_1,2N_1-1]$, then there exists a linear  $\iro(r_1,N,d,k)$  for any integer $N\geq N_1$. By some known linear codes, we construct some $k$-uniform states in the following.

\begin{theorem}\label{thm:dpri}
	If $d\geq 2k-1$ is a prime power, then there exists a $k$-uniform state in $(\bbC^{d})^{\otimes N}$ for any $N\in [2k, d+1]$; if $d\geq 4k-2$ is a prime power, then there exists a $k$-uniform state in $(\bbC^{d})^{\otimes N}$ for any  $N\geq 2k$.
\end{theorem}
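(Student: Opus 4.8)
The plan is to run the pipeline of Fig.~\ref{Fig:codeiroa} with a single, well-chosen family of linear codes, namely (doubly extended) Reed--Solomon / MDS codes, and to exploit two arithmetic coincidences hidden in the hypotheses: $d\geq 2k-1 \Leftrightarrow 2k\leq d+1$, which makes the target interval $[2k,d+1]$ nonempty, and $d\geq 4k-2 \Leftrightarrow 4k-1\leq d+1$, which is exactly what the recursion of Lemma~\ref{lem;linirr} needs to bootstrap from a finite base range to all $N$. For a prime power $d$ I would take the $[d+1,k,d-k+2]_d$ MDS code (e.g.\ a doubly extended Reed--Solomon code over $\bbF_d$); since the dual of an MDS code is again MDS, its dual distance is $w^{\bot}=k+1$ and its minimum distance is $w=d-k+2$.

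For the first claim, I would feed this code into Lemma~\ref{lem:codearray}: with $w^{\bot}-1=k$ it yields an $\oa(d^k,d+1,d,k)$ whose minimum distance equals $w=d-k+2$. Because $d\geq 2k-1$ forces $w=d-k+2\geq k+1$, Lemma~\ref{dis} shows this array is irredundant and, moreover, produces an $\iro(d^k,N',d,k)$ for every $N'$ in the range $(d+1)-(d-k+2)+k+1\leq N'\leq d+1$, which simplifies to exactly $2k\leq N'\leq d+1$. Proposition~\ref{oakui} then turns each of these arrays into a $k$-uniform state in $(\bbC^{d})^{\otimes N'}$, so a single code already delivers the whole interval $N\in[2k,d+1]$.

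For the second claim, I would invoke the recursion. The codes above are linear, hence the arrays just produced are \emph{linear} $\iro$'s for all $N\in[2k,d+1]$. Setting $N_1=2k$ in the consequence of Lemma~\ref{lem;linirr}, the base range is $[N_1,2N_1-1]=[2k,4k-1]$; the hypothesis $d\geq 4k-2$ gives $d+1\geq 4k-1$, so this base range lies inside $[2k,d+1]$ and is therefore completely covered. Repeatedly splitting $N=2k+(N-2k)$ and applying Lemma~\ref{lem;linirr} then yields a linear $\iro(\,\cdot\,,N,d,k)$ for every $N\geq 2k$, and Proposition~\ref{oakui} upgrades these to $k$-uniform states for all $N\geq 2k$.

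The only content beyond bookkeeping is the existence of the length-$(d+1)$ MDS code of dimension $k$, which I expect to be the main point to pin down: one must check that $k$ and $d-k$ lie in the range for which doubly extended Reed--Solomon codes over $\bbF_d$ are MDS, and verify the dual-distance computation $w^{\bot}=k+1$ (including the near-trivial endpoint $k=1$, where the code is a repetition code with MDS parity-check dual). Everything else is direct substitution into Lemmas~\ref{dis}, \ref{lem:codearray} and \ref{lem;linirr}, with the two equivalences $d\geq 2k-1\Leftrightarrow 2k\leq d+1$ and $d\geq 4k-2\Leftrightarrow 4k-1\leq d+1$ doing the real work of matching the stated hypotheses to the nonemptiness of the target interval and to the coverage of the recursion's base range.
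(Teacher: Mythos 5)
Your proposal is correct and follows essentially the same route as the paper: both start from the $[d+1,k,d-k+2]_d$ MDS code over $\bbF_d$, use the MDS duality to get dual distance $k+1$, pass through Lemma~\ref{lem:codearray} and Lemma~\ref{dis} to obtain linear $\iro$'s for all $N\in[2k,d+1]$, and then apply the recursion of Lemma~\ref{lem;linirr} (with base range $[2k,4k-1]\subseteq[2k,d+1]$ when $d\geq 4k-2$) to cover all $N\geq 2k$. Your explicit verification of the arithmetic identities is simply a more detailed writing-out of the same argument.
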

\begin{proof}
	When $d$ is a prime power, there exists a $[d+1,k,d-k+2]_d$  code for any $1\leq k\leq d+1$ \cite[Theorem 9, Chapter 11]{macwilliams1977theory}, which is in fact a maximum distance separable (MDS) code \footnote{An $[N,t,w]_d$ code is called an MDS code if $w=N-t+1$.}. Since the dual code of an MDS code is still an MDS code,  the dual distance is $k+1$. By Lemma \ref{lem:codearray}, there exist a linear $\iro(r,d+1,d,k)$ with minimum distance $d-k+2$. Then there exists a linear $\iro(r,N',d,k)$ any $N\in [2k, d+1]$ when $d\geq 2k-1$ by Lemma~\ref{dis}. If $d\geq 4k-2$ is a prime power, then there exists a linear $\iro(r,N,d,k)$ for  any $N\geq 2k$ by Lemma~\ref{lem;linirr}.
\end{proof}
\vspace{0.4cm}	

 In \cite{feng2017multipartite}, the authors showed that there exists a constant $M_k(d)\geq 2k$, such that for all $N\geq M_k(d)$, there exists a $k$-uniform state in  $(\bbC^d)^{\otimes N}$  by probabilistic method. From Theorem~\ref{thm:dpri}, we know that $M_k(d)$ can be chosen to be $2k$ when $d\geq 4k-2$ is a prime power. Further, $M_k(d_1d_2\dots d_n)$ can be chosen to be $2k$ when every $d_i\geq 4k-2$ is a prime power by Lemma~\ref{lem:rec}.

\begin{figure*}[t]
	\begin{tikzpicture}
	\draw[rounded corners] (0,0) rectangle (2,4);
	\draw (-1.45,0.2)--(0,0.2) (-1.45,0.9)--(0,0.9) (-1.45,1.6)--(0,1.6)
	(-1.45,2.3)--(0,2.3) (-1.45,3)--(0,3)   (-1.45,3.7)--(0,3.7) ;
	\draw (-2.4,3.7) node []{$A_1$};
	\draw (-2.4,3) node []{$A_2$};
	\draw (-2.4,2.3) node []{$A_{\ell_1}$};
	\draw (-2.4,1.6) node []{$A_{\ell_2}$};
	\draw (-2.4,0.9) node []{$A_{\ell_k}$};
	\draw (-2.4,0.2) node []{$A_N$};
	\draw (-1.8,3.7) node []{$\ket{a_j}$};
	\draw (-1.8,3) node []{$\ket{b_2}$};
	\draw (-1.8,2.3) node []{$\ket{b_{\ell_1}}$};
	\draw (-1.8,1.6) node []{$\ket{b_{\ell_2}}$};
	\draw (-1.8,0.9) node []{$\ket{b_{\ell_k}}$};
	\draw (-1.8,0.2) node []{$\ket{b_N}$};
	\draw (-2.7,2.3)--(-4.1,1.6) (-2.7,1.6)--(-4.1,1.6) (-2.7,0.9)--(-4.1,1.6);
	\draw (-4.3,1.6) node []{$A$};	
	\draw (1,2) node []{$U_{\cS}$};
	\draw (-0.75,2.75)	node []{$\vdots$};
	\draw (-0.75,1.35)	node []{$\vdots$};
	\draw (-0.75,2.06)	node []{$\vdots$};
	\draw (-0.75,0.65)	node []{$\vdots$};
	\draw (2,0.2)--(3.5,0.2) (2,0.9)--(3.5,0.9) (2,1.6)--(3.5,1.6)
	(2,2.3)--(3.5,2.3) (2,3)--(3.5,3)   (2,3.7)--(3.5,3.7);
	\draw (3.5,2.3)--(5,1.6) (3.5,1.6)--(5,1.6) (3.5,0.9)--(5,1.6);
	\draw (5.3,1.6)	node []{$\rho_A$};
	\draw (2.75,2.75)	node []{$\vdots$};
	\draw (2.75,1.35)	node []{$\vdots$};
	\draw (2.75,2.06)	node []{$\vdots$};
	\draw (2.75,0.65)	node []{$\vdots$};
	\draw (-2.5,1.4)	node []{$\vdots$};
	%\draw[rounded corners] (6,-3) rectangle (9.5,-1.5);
	%\draw (7.7,-1.8) node []{an arbitrary  state in $\bbC^{d_1}$  };
	%\draw (7.7,-2.3) node []{cannot be strongly masked in };
	%\draw (7.7,-2.8) node []{ $\bbC^{d_1}\otimes\bbC^{d_2}\otimes\cdots\otimes\bbC^{d_N}$};	

	%\draw [vecArrow](8,-0.1)--(8,-1.4);
	%\draw (7.1,-0.7) node []{$N$ is even};
	%\draw (9.2,-0.7) node []{Proposition~\ref{pro:impossibleforeven} };
	\end{tikzpicture}
	\caption{Sketch map of $k$-uniform quantum information masking. The  state $\ket{a_j}$ is  to be encoded in the $k$-uniform quantum information masking process while $|b_i\rangle$ $(2\leq i\leq N$) are the initial states of the  ancillary systems. The figure shows that the reduction state of any fixed $k$ subsystems after the above process is independent of the encoded state $\ket{a_j}$. i.e. The reduction state is only subsystems dependent and we denote it to be    $\rho_A$ for subsystems $A=\{A_{l_1},\cdots,A_{l_k}\}$. }\label{Fig:defini9}
\end{figure*}
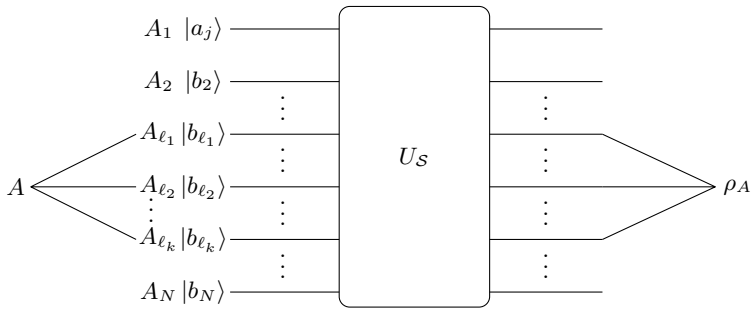

Next, we focus on the construction of $4,5$-uniform states. First, we consider $4$-uniform states.  Some of our constructions are from self-dual codes. For an $[N,\frac{N}{2},w]_s$ self-dual code, the dual distance is also $w$. For example, there exists a $[12,6,6]_4$ self-dual code \cite{Selftable}, then there exists a linear $\iro(r,N,2,4)$ for any $N\in [11,12]$ by Lemmas~\ref{dis} and \ref{lem:codearray}. See  \cite{Selftable} for  tables  of self-dual codes. By Theorem~\ref{thm:dpri}, when $d\geq 14$ is a prime power, there exists a $4$-uniform state in $(\bbC^{d})^{\otimes N}$ for any $N\geq 8$. By \cite[Theorem 12]{feng2017multipartite},  there exists a $4$-uniform state in $(\bbC^d)^{\otimes N}$ for any prime $d\geq 2$ and $N\geq 12$. Then there exists a $4$-uniform state in $(\bbC^d)^{\otimes N}$ for any $d\geq 2$ and $N\geq 12$ by Lemma~\ref{lem:rec}.  Thus, we only need to consider $d<14$ that is a prime power, and $8\leq N\leq 11$.
\begin{enumerate}[{(i)}]
\item When $d=2$, $4$-uniform states in $(\bbC^2)^{\otimes N}$ do not exist for each $N=8,9,10$  by Rains' bound \cite{rains1999quantum}.

\item When $d=3$,  $4$-uniform states in $(\bbC^3)^{\otimes 8}$ do not exist by Shadow bound \cite{huber2018bounds},  and a $4$-uniform state in $(\bbC^3)^{\otimes N}$ exists for any $N=9$ by \cite{AMEtable}, and $N=10,11$ by \cite{feng2017multipartite}.

\item When $d=4$, a $4$-uniform state in $(\bbC^4)^{\otimes N}$ exists for any  $N=9,10$ by \cite{AMEtable} and $N=11$ by a $[12,6,6]_4$ self-dual code in \cite{Selftable}.

\item When $d=5$, a $4$-uniform state in $(\bbC^5)^{\otimes N}$ exists for any $N\in[8, 10]$ by \cite{AMEtable},  $N=11$ by a $[12,6,6]_5$ self-dual code  in \cite{Selftable}.

\item When $d=7,8$,   a $4$-uniform state in $(\bbC^7)^{\otimes N}$ exists for any $N\in [8, 11]$ by \cite{AMEtable}.

\item When $d=9$, a $4$-uniform state in $(\bbC^9)^{\otimes N}$ exists for any $N\in [9,11]$ by Lemma~\ref{lem:rec}, and $N=8$ by Theorem~\ref{dis}.

\item When $d=11,13$, a $4$-uniform state in $(\bbC^{d})^{\otimes N}$ exists for any $N\in[8,11]$ by Theorem~\ref{thm:dpri}.
\end{enumerate}

By using Lemma~\ref{lem:rec}, we are able to list the existence of $4$-uniform states in $(\bbC^{d})^{\otimes N}$  in Table~\ref{table:4uni}. For the same discussion as above, we have  Table~\ref{table:5uni} for the existence of $5$-uniform states. The detail of arguments are provided in Appendix~\ref{appendix:5-uniform}. In the next section, we shall apply results in this section to the quantum information masking.

\section{$k$-uniform quantum information masking}\label{sec:k-uniformmasking}

In \cite{modi2018masking}, the authors proposed the concept of quantum information masking. An operation $\cS$ is said to mask quantum information contained in states $\{\ket{a_j}_{A_1}\in \cH_{A_1}\}$ by mapping them to states $\{\ket{\psi_j}\in \cH_{A_1}\otimes\cH_{A_2}\}$ such that  all the	reductions to one party of $\ket{\psi_j}$ are identical.
In \cite{li2018masking}, quantum information masking in multipartite systems is proposed. It is also required that all the reductions to one party of $\{\ket{\psi_j}\in \otimes_{\ell=1}^N\cH_{A_\ell}\}$ are identical. However, collusion between
some parties would then reveal the encoded quantum information. For example,  a masker $\cS$  masks quantum information contained in  $\{\ket{0},\ket{1},\ket{2}\}\in \bbC^3$ into $\{\ket{\psi_0},\ket{\psi_1},\ket{\psi_2}\}\in \bbC^3\otimes\bbC^3\otimes\bbC^3$, where $\ket{\psi_0}_{ABC}=\frac{1}{\sqrt{3}}(\ket{000}+\ket{111}+\ket{222})$, $\ket{\psi_1}_{ABC}=\frac{1}{\sqrt{3}}(\ket{021}+\ket{102}+\ket{210})$, and
$\ket{\psi_2}_{ABC}=\frac{1}{\sqrt{3}}(\ket{012}+\ket{120}+\ket{201})$. If Alice and Bob are collusive, then $\rho^{(0)}_{AB}=\tr_{C}\ketbra{\psi_0}{\psi_0}=\frac{1}{3}(\ketbra{00}{00}+\ketbra{11}{11}+\ketbra{22}{22})$, $\rho^{(1)}_{AB}=\tr_{C}\ketbra{\psi_1}{\psi_1}=\frac{1}{3}(\ketbra{02}{02}+\ketbra{10}{10}+\ketbra{21}{21})$, and $\rho^{(2)}_{AB}=\tr_{C}\ketbra{\psi_2}{\psi_2}=\frac{1}{3}(\ketbra{01}{01}+\ketbra{12}{12}+\ketbra{20}{20})$. Alice and Bob can easily distinguish $\rho^{(0)}_{AB}$, $\rho^{(1)}_{AB}$, and $\rho^{(2)}_{AB}$, and they would reveal the encoded quantum information.  To avoid this collusion, we propose the $k$-uniform quantum information masking as follows.

\begin{definition}\label{def:masking}
	An operation $\cS$ is said to $k$-uniformly mask quantum information contained in states $\{\ket{a_j}_{A_1}\in \cH_{A_1}\}$ by mapping them to states $\{\ket{\psi_j}\in\otimes_{\ell=1}^{N}\cH_{A_\ell}\}$ such that all the  reductions to $k$ parties of $\ket{\psi_j}$ are identical; i.e., for all $A=\{A_{\ell_1},A_{\ell_2},\ldots,A_{\ell_k}\}$ which is any subset of $\{A_1,A_2,\ldots,A_N\}$ with cardinality $k$,
	\begin{equation}
	\rho_{A}=\tr_{A^c}\ketbra{\psi_j}{\psi_j}
	\end{equation}
	has no information about the value of $j$. Specially, if $k=\fl{N}{2}$, we refer to the $k$-uniform quantum information masking as the strong quantum information masking.
\end{definition}

Our masking protocol also forms the basis for quantum secret sharing \cite{zukowski1998quest,cleve1999how}. The $k$-uniform quantum information masking allows secret sharing of quantum information from a ``boss'' to his ``subordinates'',  such that every collaboration between $k$ subordinates cannot retrieve the information. When $k=1$, the above definition is the same as \cite{modi2018masking,li2018masking}.
In fact, $\cS$ can be modeled by a unitary operator $U_{\cS}$ on the system $A_1$ plus some ancillary systems $\{A_2,A_3,\ldots,A_N\}$ and given by
\begin{equation*}
\cS: U_{\cS}\ket{a_j}_{A_1}\otimes\ket{b}_{A_1^c}=\ket{\psi_j}.
\end{equation*}
 The unitary operator  $U_{\cS}$ preserves orthogonality. See Fig.~\ref{Fig:defini9} for the sketch map of $k$-uniform quantum information masking. We assume that $\cH_{A_1}$ associates with the Hilbert space $\bbC^d$ and $U_{\cS}\ket{j}_{A_1}\otimes\ket{b}_{A_1^c}=\ket{\psi_j}$ for $0\leq j\leq d-1$. Then the masking process can be expressed by:
\begin{equation*}
\ket{j}\rightarrow \ket{\psi_j}, \quad \forall \ 0\leq j\leq d-1.
\end{equation*}
 Next, we give the relation between $k$-uniform states and $k$-uniform quantum information masking.

\begin{theorem}\label{thm:k-uniformmasking}
 If there exists a $(k+1)$-uniform states in $(\bbC^d)^{\otimes (N+1)}$, then all states in $\bbC^d$ can be $k$-uniformly masked in $(\bbC^d)^{\otimes N}$.
\end{theorem}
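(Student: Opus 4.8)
The plan is to use the first subsystem of the given $(k+1)$-uniform state as the logical register, and to read the masked images directly off its expansion. Let $\ket{\psi}$ be a $(k+1)$-uniform state in $(\bbC^d)^{\otimes(N+1)}$ with parties $A_1,A_2,\ldots,A_{N+1}$. First I would expand it along $A_1$ in the computational basis,
\[
\ket{\psi} = \sum_{j=0}^{d-1}\ket{j}_{A_1}\otimes\ket{\chi_j}_{A_2\cdots A_{N+1}},
\]
with a priori unnormalized vectors $\ket{\chi_j}$. Since $(k+1)$-uniformity forces every single-party reduction to be maximally mixed, the reduction $\rho_{A_1}=\sum_{j,j'}\braket{\chi_{j'}}{\chi_j}\ketbra{j}{j'}$ must equal $\tfrac1d I$, which gives $\braket{\chi_{j'}}{\chi_j}=\tfrac1d\delta_{jj'}$. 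Writing $\ket{\chi_j}=\tfrac1{\sqrt d}\ket{\phi_j}$, this says exactly that $\{\ket{\phi_j}\}_{j=0}^{d-1}$ is an orthonormal family in $(\bbC^d)^{\otimes N}$.

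Next I would define the masking operation to send $\ket{j}\mapsto\ket{\phi_j}$. Because the $\ket{\phi_j}$ are orthonormal, for any fixed ancilla state $\ket{b}_{A_1^c}$ the assignment $\ket{j}_{A_1}\otimes\ket{b}_{A_1^c}\mapsto\ket{\phi_j}$ is an isometry from a $d$-dimensional subspace into $(\bbC^d)^{\otimes N}$, and it extends to a unitary $U_\cS$ on the whole $N$-qudit space. This $U_\cS$ is precisely a masker in the sense of Definition~\ref{def:masking}, carrying the one-qudit states $\ket{j}_{A_1}$ to $\ket{\psi_j}:=\ket{\phi_j}$.

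It remains to verify the $k$-uniform masking condition for arbitrary inputs, and this is the crux of the argument, although it collapses to a single block computation rather than anything delicate. Fix any $k$-subset $A=\{A_{\ell_1},\ldots,A_{\ell_k}\}$ of $\{A_2,\ldots,A_{N+1}\}$ and consider the $(k+1)$-subset $\{A_1\}\cup A$. Tracing $\ketbra{\psi}{\psi}$ over everything outside $\{A_1\}\cup A$ yields
\[
\rho_{\{A_1\}\cup A} = \frac{1}{d}\sum_{j,j'}\ketbra{j}{j'}_{A_1}\otimes\sigma_{jj'},\qquad \sigma_{jj'}:=\tr_{(A_2\cdots A_{N+1})\setminus A}\bigl(\ketbra{\phi_j}{\phi_{j'}}\bigr).
\]
By $(k+1)$-uniformity this equals $\tfrac{1}{d^{k+1}}I_{A_1}\otimes I_A$, and matching the $A_1$-block structure forces $\sigma_{jj'}=0$ for $j\neq j'$ and $\sigma_{jj}=\tfrac{1}{d^k}I_A$ for every $j$. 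Then for an arbitrary input $\ket{a}=\sum_j c_j\ket{j}$ the masked output $\sum_j c_j\ket{\phi_j}$ has $A$-reduction $\sum_{j,j'}c_j\bar c_{j'}\sigma_{jj'}=\tfrac{1}{d^k}I_A$, independent of the coefficients $c_j$, so every state of one qudit is $k$-uniformly masked. The point to get right is to enlarge the masked subset by $A_1$ before invoking uniformity; the vanishing of the off-diagonal blocks $\sigma_{jj'}$ is exactly what makes the masking succeed for all superpositions and not merely for basis states.
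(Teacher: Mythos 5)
Your proof is correct and follows essentially the same route as the paper's: expand the $(k+1)$-uniform state along $A_1$, deduce orthonormality of the $\ket{\phi_j}$ from the maximally mixed single-party reduction, and define the masker $\ket{j}\mapsto\ket{\phi_j}$. The only difference is that where the paper invokes an external result (Proposition~2 of Ref.~\cite{li2019k-uniform}) to conclude that arbitrary superpositions $\sum_j c_j\ket{\phi_j}$ still have maximally mixed $k$-party reductions, you prove this inline by reading off the block structure of $\rho_{\{A_1\}\cup A}$ and observing that the off-diagonal blocks $\sigma_{jj'}$ vanish while the diagonal ones equal $\tfrac{1}{d^k}I_A$ --- a self-contained and correct replacement for that citation.
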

\begin{proof}
	We assume that $\ket{\psi}$ is a $(k+1)$-uniform state in $(\bbC^d)^{\otimes (N+1)}$. It means that any reduction to $k+1$ parties is maximally mixed. Let $\{\ket{j}\}_{i=0}^{d-1}$ be a computational basis in $\bbC^d$. We can write
	\begin{equation}
	\ket{\psi}=\frac{1}{\sqrt{d}}\sum_{j=0}^{d-1}\ket{j}\ket{\psi_j}.
	\end{equation}
	Since the reduction of the first party is also maximally mixed, i.e. $\rho_1=\frac{1}{d}I_d$, it implies $\braket{\psi_s}{\psi_t}=\d_{s,t}$ by Lemma~\ref{lem:orthonomal} in Appendix~\ref{appendix:orthonomal}. Then we can define a masker
	\begin{equation*}
	\cS: \ket{j}\rightarrow\ket{\psi_j}, \quad \forall \ 0\leq j\leq d-1.
	\end{equation*}
     The general superposition state $\sum_{j=0}^{d-1}\alpha_j\ket{j}\in \bbC^d$ should be mapped into $\sum_{j=0}^{d-1}\alpha_j\ket{\psi_j}\in (\bbC^d)^{\otimes N}$, where $\sum_{j=0}^{d-1} |\alpha_j|^2=1$. By  \cite[Proposition 2]{li2019k-uniform}, we obtain that any reduction to $k$ parties of $\sum_{j=0}^{d-1}\alpha_j\ket{\psi_j}$ is maximally mixed. Thus all states in $\bbC^d$ can be $k$-uniformly masked in $(\bbC^d)^{\otimes N}$.
\end{proof}
\vspace{0.4cm}

All the $k$-uniform states in Sec.~\ref{sec:k-uniform} and Sec.~\ref{sec:construction} can be used for $(k-1)$-uniform quantum information masking by Theorem~\ref{thm:k-uniformmasking}.

\begin{example}
Since
\begin{equation*}
\begin{aligned}
\ket{\psi}=\frac{1}{4}( &-\ket{000000}+\ket{001111}-\ket{010011}+\ket{011100}\\
&+\ket{000110}+\ket{001001}+\ket{010101}\\
&+\ket{011010}-\ket{111111}+\ket{110000}\\
&+\ket{101100}-\ket{100011}+\ket{111001}\\
&+\ket{110110}-\ket{101010}-\ket{100101})
\end{aligned}
\end{equation*}	
is a $3$-uniform state in $(\bbC^2)^{\otimes 6}$ \cite{goyeneche2014genuinely}, we can define a masker
\begin{equation*}
\begin{aligned}
\cS:\ket{0}\rightarrow \frac{1}{2\sqrt{2}}&(-\ket{00000}+\ket{01111}-\ket{10011}\\
&+\ket{11100}+\ket{00110}+\ket{01001}\\
&+\ket{10101})+\ket{11010});\\
\ket{1}\rightarrow
\frac{1}{2\sqrt{2}}&(-\ket{11111}+\ket{10000}+\ket{01100}\\
&-\ket{00011}+\ket{11001}+\ket{10110}\\
&-\ket{01010}-\ket{00101}).
\end{aligned}
\end{equation*}
By using the masker $\cS$,  all states in $\bbC^{2}$ can be $2$-uniformly (strongly) masked in $(\bbC^2)^{\otimes 5}$ by Theorem~\ref{thm:k-uniformmasking}.
\end{example}

 From Table~\ref{Table:k-unfiormresults}, we know that a $2$-uniform state  in $(\bbC^d)^{\otimes 4}$ exists for any $d\geq 3$ and $d\neq 6$,  then all states in $\bbC^d$ can be (1-uniformly) masked in $(\bbC^d)^{\otimes 3}$ for any $d\geq 3$ and $d\neq 6$. Thus  \cite[Corollary 2]{li2018masking} is a special case of Theorem~\ref{thm:k-uniformmasking}.  All states in $\bbC^d$ can be strongly masked in $(\bbC^d)^{\otimes 5}$ for any $d\geq 2$, since an $\ame$ state in $(\bbC^d)^{\otimes 6}$ exists for any $d\geq 2$ by Table~\ref{Table:k-unfiormresults}. Moreover, since an $\ame$ state in $(\bbC^d)^{\otimes 8}$ exists for any prime power $d\geq 5$ by Table~\ref{table:4uni}, all states in $\bbC^d$ can be strongly masked in $(\bbC^d)^{\otimes 7}$ for any prime power $d\geq 5$.  Thus, we have the following corollary.

%\begin{corollary}
%	Let $k\geq 2$. If $d\geq 2k-1$ is a prime power, all states in $\bbC^d$ can be $(k-1)$-uniformly masked in $(\bbC^d)^{\otimes N}$ for $N\in [2k-1,d]$; if $d\geq 4k-2$ is a prime power, all states in $\bbC^d$ can be $(k-1)$-uniformly masked in $(\bbC^d)^{\otimes N}$ for $N\geq 2k-1$.
%\end{corollary}

\begin{figure*}[t]
	\begin{tikzpicture}
	\draw[rounded corners] (0,0) rectangle (4,2.7);
	\draw (2,2.4) node []{a $(k+1)$-uniform state  };
	\draw (2,1.9) node []{exists in $(\bbC^d)^{\otimes (N+1)}$};
	\draw[rounded corners] (0.8,0) rectangle (4,1.5);
	\draw (2.4,1) node []{an $\ame$ state exists};
	\draw (2.5,0.5) node []{ in $(\bbC^d)^{\otimes (N+1)}$};

	\draw[rounded corners] (6,0) rectangle (10.5,2.7);
	\draw (8.25,2.4) node []{all states in $\bbC^d$ can be };
	\draw (8.25,1.9) node []{ $k$-uniformly masked in $(\bbC^d)^{\otimes N}$  };
	
	\draw (7.7,1.3) node []{all states in $\bbC^d$ can be };
	\draw (7.7,0.8) node []{ strongly masked in };
	\draw (7.7,0.3) node []{  $(\bbC^d)^{\otimes N}$  };

	\draw[rounded corners] (12.5,0) rectangle (16.5,2.7);
	\draw (14.35,2.3) node []{a pure $((N,d,k+1))_d$  };
	\draw (14.35,1.8) node []{code};
	\draw[rounded corners] (12.5,0) rectangle (15.7,1.5);	
	\draw (14,1.18) node []{a pure  };
	\draw (14,0.68) node []{$((N,d,\fl{N}{2}+1))_d$};
	\draw (14,0.18) node []{code};
	
	\draw[rounded corners] (6,-1.5) rectangle (9.5,1.5);
	\draw (7.7,-0.2) node []{all states in $\bbC^{d_1}$ can be};
	\draw (7.7,-0.7) node []{ strongly masked in };
	\draw (7.7,-1.2) node []{ $\bbC^{d_1}\otimes\bbC^{d_2}\otimes\cdots\otimes\bbC^{d_N}$};	
	\draw [vecArrow](5.9,-0.75)--(3.7,-0.75);
	\draw (4.8,-1.1) node []{Proposition~\ref{pro:impossibleforeven} };
	\draw[rounded corners] (2,-1.3) rectangle (3.6,-0.2);
	\draw (2.75,-0.8) node []{$N$ is odd  };

	\draw [vecArrow](4.1,2)--(5.9,2);
	\draw (5,2.3) node []{Theorem~\ref{thm:k-uniformmasking} };
	
	\draw [vecArrow](4.1,0.6)--(5.9,0.6);
	\draw (5,0.9) node []{$N$ is odd };
	\draw (5,0.3) node []{Corollary~\ref{cor:AMEmasking} };

	%\draw [vecArrow](12,2)--(12.4,2);
	\draw [vecArrow](12.4,2)--(10.6,2);
	%\draw [vecArrow](12,0.6)--(12.4,0.6);
    \draw [vecArrow](12.4,0.6)--(9.6,0.6);	
	\draw (11.5,2.3) node []{Theorem~\ref{thm:QECC-and-masking} };
	\draw (11.5,0.9) node []{$N$ is odd };
    \draw (11.5,0.3) node []{Theorem~\ref{thm:QECC-and-masking} };
	\end{tikzpicture}
	\caption{The sketch map concludes the results in Sec.~\ref{sec:k-uniformmasking}.}\label{Fig:masking}
\end{figure*}
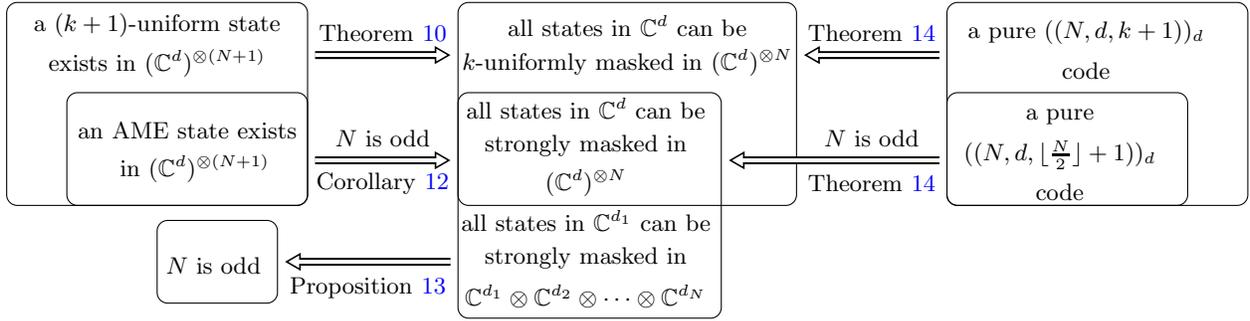

\begin{corollary}\label{cor:AMEmasking}
	When $N$ is odd, if there exists an $\ame$ state in $(\bbC^d)^{\otimes (N+1)}$, then all states in $\bbC^d$ can be strongly masked in $(\bbC^d)^{\otimes N}$.
\end{corollary}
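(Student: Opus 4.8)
The plan is to recognize this corollary as an immediate specialization of Theorem~\ref{thm:k-uniformmasking}, so that the whole argument reduces to matching parameters and checking one arithmetic identity. The guiding observation is that, by Definition~\ref{def:masking}, \emph{strong masking} is exactly $\fl{N}{2}$-uniform masking; hence I would set $k=\fl{N}{2}$ and aim to feed the hypothesized $\ame$ state into Theorem~\ref{thm:k-uniformmasking} as the required $(k+1)$-uniform state.

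First I would unpack the hypothesis. Since $N$ is odd, write $N=2m+1$, so that $\fl{N}{2}=m$ while $N+1=2m+2$ is even. An $\ame$ state in $(\bbC^d)^{\otimes(N+1)}$ is by definition a $\fl{N+1}{2}$-uniform state, and because $N+1$ is even this equals $(N+1)/2=m+1$. Thus the hypothesized $\ame$ state is precisely an $(m+1)$-uniform state in $(\bbC^d)^{\otimes(N+1)}$. Next I would match this against Theorem~\ref{thm:k-uniformmasking}: taking $k=m=\fl{N}{2}$, the theorem's required $(k+1)$-uniform state in $(\bbC^d)^{\otimes(N+1)}$ is exactly the $(m+1)$-uniform $\ame$ state just produced. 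Applying the theorem then yields $\fl{N}{2}$-uniform masking of all states in $\bbC^d$ inside $(\bbC^d)^{\otimes N}$, which is strong masking by Definition~\ref{def:masking}.

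There is no real obstacle here; the only point requiring care is the arithmetic identity $\fl{N+1}{2}=\fl{N}{2}+1$, valid precisely because $N$ is odd. This identity is exactly what guarantees that the uniformity supplied by the $\ame$ state exceeds the uniformity demanded by strong masking by one, which is the gap bridged by Theorem~\ref{thm:k-uniformmasking}. The oddness of $N$ is therefore essential: for even $N$ the floor would not increase, the $\ame$ state would only be $\fl{N}{2}$-uniform rather than $(\fl{N}{2}+1)$-uniform, and the theorem could not be invoked, consistent with the fact that strong masking is impossible in the even case.
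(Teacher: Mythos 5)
Your proposal is correct and follows exactly the paper's (implicit) derivation: the corollary is stated as an immediate specialization of Theorem~\ref{thm:k-uniformmasking} with $k=\fl{N}{2}$, using the fact that for odd $N$ an $\ame$ state of $N+1$ parties is $\frac{N+1}{2}=\fl{N}{2}+1$-uniform. Your explicit attention to the identity $\fl{N+1}{2}=\fl{N}{2}+1$ and its failure for even $N$ is a nice touch that the paper leaves unstated.
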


From Theorem~\ref{thm:dpri}, we know that when $d\geq N$ is a prime power, there always exists an AME states in $(\bbC^d)^{\otimes N+1}$. These AME states can be used to strong quantum information masking when $N$ is odd by Corollary~\ref{cor:AMEmasking}.  However, when $N$ is even, strong quantum information masking is impossible.

\begin{proposition}\label{pro:impossibleforeven}
	When $N$ is even, an arbitrary  state in $\bbC^{d_1}$ cannot be strongly masked in $\bbC^{d_1}\otimes\bbC^{d_2}\otimes\cdots\otimes\bbC^{d_N}$.
\end{proposition}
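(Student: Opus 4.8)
The plan is to reduce the multipartite statement to the bipartite no-masking theorem of Modi \emph{et al.}~\cite{modi2018masking} by grouping the $N$ parties into two equal halves. Since $N$ is even, strong masking is exactly $\fl{N}{2}$-uniform masking with $\fl{N}{2}=N/2$. Suppose, for contradiction, that some operation $\cS:\ket{a_j}\mapsto\ket{\psi_j}$ strongly masks all states of $\bbC^{d_1}$ in $\bbC^{d_1}\otimes\bbC^{d_2}\otimes\cdots\otimes\bbC^{d_N}$. I would set $A=\{A_1,\ldots,A_{N/2}\}$ and $B=\{A_{N/2+1},\ldots,A_N\}$ and regard the total space as a bipartite system $\cH_A\otimes\cH_B$ with $\dim\cH_A=\prod_{i=1}^{N/2}d_i$ and $\dim\cH_B=\prod_{i=N/2+1}^{N}d_i$.

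The key observation is that both $A$ and $B$ are subsets of exactly $N/2=\fl{N}{2}$ parties. Hence, applying Definition~\ref{def:masking} to these two particular subsets, the reductions $\rho_A=\tr_B\ketbra{\psi_j}{\psi_j}$ and $\rho_B=\tr_A\ketbra{\psi_j}{\psi_j}$ are \emph{both} independent of $j$. Therefore $\cS$ masks every state of $\bbC^{d_1}$ into the bipartite system $\cH_A\otimes\cH_B$ in the sense of~\cite{modi2018masking}: neither single-side marginal carries any information about $j$. This is precisely the situation forbidden by the no-masking theorem, so it remains only to reproduce that impossibility at the relevant dimensions.

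For the contradiction I would make the bipartite step self-contained, since only a fixed two-dimensional input subspace is needed. Take $\ket{0}\mapsto\ket{\psi_0}$, $\ket{1}\mapsto\ket{\psi_1}$; because $U_{\cS}$ preserves orthogonality, $\braket{\psi_0}{\psi_1}=0$, and masking gives $\rho_A^{(0)}=\rho_A^{(1)}=:\rho_A$ and $\rho_B^{(0)}=\rho_B^{(1)}=:\rho_B$. Masking the superpositions $\alpha\ket{0}+\beta\ket{1}$ for all $\alpha,\beta$ forces the cross terms in $\tr_B(\alpha\ket{\psi_0}+\beta\ket{\psi_1})(\alpha^*\bra{\psi_0}+\beta^*\bra{\psi_1})$ to vanish, i.e. $\tr_B(\ketbra{\psi_0}{\psi_1})=0$ and likewise $\tr_A(\ketbra{\psi_0}{\psi_1})=0$. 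Writing $\ket{\psi_1}=(I_A\otimes V)\ket{\psi_0}$ (two purifications of the same $\rho_A$) and imposing $\rho_B^{(1)}=\rho_B^{(0)}$ yields $[V,\rho_B]=0$; substituting into $\tr_B(\ketbra{\psi_0}{\psi_1})=0$ and expanding in a Schmidt basis of $\ket{\psi_0}$ forces $V$ to act as the zero operator on the support of $\rho_B$, which contradicts the unitarity of $V$ there.

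I expect the main obstacle to be this final step: converting ``all marginals equal'' into the operator identity $[V,\rho_B]=0$ and then extracting a contradiction from the coherence condition $\tr_B(\ketbra{\psi_0}{\psi_1})=0$. One must treat possible degeneracy in the spectrum of $\rho_B$ with care, since the purifying unitary $V$ is determined only up to the block structure of $\rho_B$, and one should check that the argument does not covertly assume full-rank marginals. The grouping step itself is routine once one notices that each half contains precisely $\fl{N}{2}$ parties; the entire content lies in the bipartite no-masking statement, which may either be cited from~\cite{modi2018masking} or reproved as sketched above.
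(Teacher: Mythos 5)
Your proposal is correct and follows essentially the same route as the paper: both reduce strong masking to bipartite masking across the two halves $A=\{A_1,\ldots,A_{N/2}\}$ and $B=\{A_{N/2+1},\ldots,A_N\}$, and both derive the contradiction from the equality of the two marginals combined with the vanishing of the cross term $\tr_B(\ketbra{\psi_0}{\psi_1})$ forced by masking superpositions. The only cosmetic difference is that the paper writes $\ket{\psi_0},\ket{\psi_1}$ as Schmidt decompositions sharing the $A$-side (so your purification unitary $V$ appears implicitly via $\ket{c_j}=V\ket{b_j}$) and obtains the cross-orthogonality $\braket{b_s}{c_t}=0$ by varying the superposition phases, rather than packaging the endgame as $[V,\rho_B]=0$ together with $PVP=0$ on the support of $\rho_B$.
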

\begin{proof}
	Assume that we can strongly mask two orthogonal states $\ket{s_0}$ and  $\ket{s_1}$ in $\bbC^{d_1}$ into $\bbC^{d_1}\otimes \bbC^{d_2}\otimes\cdots\otimes\bbC^{d_N}$, where $N$ is even. Let $\ket{s_0}\rightarrow \ket{\psi_0}$, $\ket{s_1}\rightarrow \ket{\psi_1}$, where $\ket{\psi_0},\ket{\psi_1}\in  \bbC^{d_1}\otimes \bbC^{d_2}\otimes\cdots\otimes\bbC^{d_N}$. Let  $\rho^{(0)}=\ketbra{\psi_0}{\psi_0}$, $\rho^{(1)}=\ketbra{\psi_1}{\psi_1}$, $A=\{A_1,A_2,\ldots, A_{\frac{N}{2}}\}$ and $B=\{A_{\frac{N}{2}+1},A_{\frac{N}{2}+2},\ldots,A_{N}\}$. Since  $\rho_A^{(0)}=\rho_A^{(1)}$, we can write $\ket{\psi_0}$ and $\ket{\psi_1}$ by Schmidt decomposition as
	\begin{equation*}\label{eq:schmidtdecom}
	\ket{\psi_0}=\sum_{j}\sqrt{\lambda_j}\ket{a_j}_A\ket{b_j}_B, \quad	\ket{\psi_1}=\sum_{j}\sqrt{\lambda_j}\ket{a_j}_A\ket{c_j}_B,
	\end{equation*}
	where $\lambda_j> 0$, $\{\ket{a_j}\}$, $\{\ket{b_j}\}$, $\{\ket{c_j}\}$ are all orthonormal sets.

	Assume  that we can mask the superposition state, $\ket{\psi}=u\ket{\psi_0}+v\ket{\psi_1}=\sum_{j}\sqrt{\lambda_j}\ket{a_j}(u\ket{b_j}+v\ket{c_j})$, where $|u|^2+|v|^2=1$.
	Since $\rho_{A}=\tr_B(\ketbra{\psi}{\psi})=\rho_{A}^{(0)}=\sum_{j}\lambda_j\ketbra{a_j}{a_j}$ by the masking condition, it implies that $(u^*\bra{b_s}+v^*\bra{c_s})(u\ket{b_t}+v\ket{c_t})=\delta_{s,t}$ by Lemma~\ref{lem:orthonomal} in Appendix~\ref{appendix:orthonomal}. Then we have
	\begin{equation}
	u^*v\braket{b_s}{c_t}+uv^*\braket{c_s}{b_t}=0, \quad \forall s,t.
	\end{equation}
	By setting $u=\frac{1}{\sqrt{2}},v=\frac{1}{\sqrt{2}}$ and $u=\frac{1}{\sqrt{2}},v=\frac{1}{\sqrt{2}}i$,  we obtain $\braket{b_s}{c_t}=0$ for all $s,t$. By the masking condition,
	\begin{equation}
	\rho_B^{(0)}=\sum_{j}\lambda_j\ketbra{b_j}{b_j}=\sum_{j}\lambda_j\ketbra{c_j}{c_j}=\rho_B^{(1)}.
	\end{equation}
	Then for any $j$, $\bra{b_j}	\rho_B^{(0)}\ket{b_j}=\bra{b_j}\rho_B^{(1)}\ket{b_j}$. It implies $\lambda_j=0$ for any $j$, which is impossible.
	Thus an arbitrary quantum state in $\bbC^{d_1}$ cannot be strongly masked in $\bbC^{d_1}\otimes\bbC^{d_2}\otimes\cdots\otimes\bbC^{d_N}$.
\end{proof}
\vspace{0.4cm}

Proposition~\ref{pro:impossibleforeven} can be related to no-go theorems \cite{wootters1982a,gisin1997optimal,lamaslinares2002experimental,barnum1996noncommuting,kalev2008no-broadcasting,pati2000impossibility,samal2011experimental,modi2018masking}, that is when $N$ is even, strong quantum information masking is impossible. From Proposition~\ref{pro:impossibleforeven}, we can also obtain that if all states in $\bbC^{d_1}$ can be strongly masked in $\bbC^{d_1}\otimes\bbC^{d_2}\otimes\cdots\otimes\bbC^{d_N}$, then $N$ is odd.
Actually, Theorem~\ref{thm:k-uniformmasking} can be generalized to heterogeneous systems, i.e. if there exists a $(k+1)$-uniform states in $\bbC^{d_1}\otimes\bbC^{d_1}\otimes\bbC^{d_2}\otimes\cdots\otimes\bbC^{d_N}$, then all states in $\bbC^{d_1}$ can be $k$-uniformly masked in $\bbC^{d_1}\otimes\bbC^{d_2}\otimes\cdots\otimes\bbC^{d_{N}}$. See \cite{goyeneche2016multipartite,shishenchenzhang} for some constructions of $k$-uniform states in heterogeneous systems.

Finally, we introduce the relation between qQECCs and  $k$-uniform quantum information masking.
	A \emph{$k$-uniform space} is a subspace of $(\bbC^d)^{\otimes N}$ in which every state is a $k$-uniform state.

\begin{theorem}\label{thm:QECC-and-masking}
	In the $k$-uniform quantum information masking scheme, we assume that the reduction states of $k$ parties are proportional to identity. The following statements are equivalent:
\begin{enumerate}[(i)]
	\item a pure $((N,d,k+1))_d$ QECC exists;
	\item all states in $\bbC^d$ can be $k$-uniformly masked in $(\bbC^d)^{\otimes N}$.
\end{enumerate}	
\end{theorem}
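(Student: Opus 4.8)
The plan is to route both implications through one common identity. Writing $\ket{\psi_j}$ for the image $U_{\cS}\ket{j}_{A_1}\otimes\ket{b}_{A_1^c}$ of the computational basis vector and setting $R^{(A)}_{st}:=\tr_{A^c}\ketbra{\psi_s}{\psi_t}$ for each $k$-subset $A$, I claim the whole theorem collapses to the single condition
\begin{equation}
R^{(A)}_{st}=\delta_{st}\,\frac{1}{d^k}I_A, \qquad \text{for every $k$-subset } A,
\tag{$\star$}
\end{equation}
so the first task is to show that $(\star)$ is \emph{exactly} the defining property of a pure $((N,d,k+1))_d$ code on $V=\mathrm{span}\{\ket{\psi_j}\}$. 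Recalling from Appendix~\ref{appendix:QECC} that such a code is a $d$-dimensional subspace whose orthonormal basis obeys the Knill--Laflamme purity condition $\bra{\psi_s}E\ket{\psi_t}=\delta_{st}\,\tr(E)/d^N$ for every error $E$ of weight at most $k$, I would rewrite, for $E=E_A\otimes I_{A^c}$ supported on a $k$-subset $A$, the quantity $\bra{\psi_s}E\ket{\psi_t}=\tr_A\!\left(E_A\,R^{(A)}_{ts}\right)$. Expanding $R^{(A)}_{ts}$ in a Hilbert--Schmidt orthonormal operator basis on $A$ and testing against $E_A=I_A$ and against the traceless basis elements then shows that the purity condition holds for all such $E$ if and only if $(\star)$ holds: the traceless tests kill the off-diagonal blocks and the non-identity part of the diagonal, while $E_A=I_A$ fixes the normalisation $\tr R^{(A)}_{tt}=1$.

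With this bridge in place, the direction (i)$\Rightarrow$(ii) is immediate. A pure $((N,d,k+1))_d$ code supplies an orthonormal basis $\{\ket{\psi_j}\}$ obeying $(\star)$, so for any normalised superposition $\sum_j\alpha_j\ket{\psi_j}$ one computes $\tr_{A^c}\big(\sum_{s,t}\alpha_s\alpha_t^*\ketbra{\psi_s}{\psi_t}\big)=\sum_s|\alpha_s|^2\,\frac{1}{d^k}I_A=\frac{1}{d^k}I_A$, which is independent of the coefficients $\alpha_j$. Defining $\cS:\ket{j}\mapsto\ket{\psi_j}$ therefore $k$-uniformly masks all of $\bbC^d$, with the reductions automatically proportional to the identity, as the standing assumption requires.

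For (ii)$\Rightarrow$(i) I would start from the masker, which yields orthonormal states $\ket{\psi_j}$ (since $U_{\cS}$ preserves orthogonality) spanning a $d$-dimensional subspace $V$. The masking hypothesis says that $\tr_{A^c}\ketbra{\Psi}{\Psi}$ is the same for every masked state $\ket{\Psi}=\sum_j\alpha_j\ket{\psi_j}$, and the extra assumption pins this common value to $\frac{1}{d^k}I_A$. Choosing $\alpha_s=\delta_{s,m}$ gives $R^{(A)}_{mm}=\frac{1}{d^k}I_A$; then the two-parameter superposition trick already used in Proposition~\ref{pro:impossibleforeven} --- taking $(\alpha_m,\alpha_n)=\tfrac{1}{\sqrt2}(1,1)$ and $\tfrac{1}{\sqrt2}(1,i)$, subtracting the diagonal contributions and invoking Lemma~\ref{lem:orthonomal} --- forces $R^{(A)}_{mn}+R^{(A)}_{nm}=0$ and $R^{(A)}_{mn}=R^{(A)}_{nm}$, hence $R^{(A)}_{mn}=0$ for $m\neq n$. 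This is precisely $(\star)$, so $V$ is a pure $((N,d,k+1))_d$ code.

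I expect the main obstacle to be the bridge itself, namely translating faithfully between the two languages: the masking framework speaks in the reduced density matrices $R^{(A)}_{st}$, whereas the QECC definition speaks in error operators and Knill--Laflamme coefficients. Getting the purity normalisation exactly right --- in particular, seeing that ``reductions proportional to identity'' is exactly what upgrades a generic (possibly degenerate) error-detecting code to a \emph{pure} one, and that imposing the weight-$\leq k$ error condition is equivalent to demanding $(\star)$ on all $k$-subsets simultaneously --- is where the care is needed. By contrast, the superposition computations are routine variants of the argument in Proposition~\ref{pro:impossibleforeven}.
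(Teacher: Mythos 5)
Your argument is correct, and its skeleton matches the paper's: both directions pivot on the correspondence between a pure $((N,d,k+1))_d$ code and a $d$-dimensional subspace all of whose states have maximally mixed $k$-party reductions, with the masker $\cS:\ket{j}\mapsto\ket{\psi_j}$ built from (resp.\ yielding) an orthonormal basis of that subspace. The difference is one of self-containedness: the paper outsources the entire equivalence to Lemma~\ref{lem:code-and-k-uniformspace} (the cited result of Huber \emph{et al.}) and, for the forward direction, to the definition of a $k$-uniform space, whereas you open that black box. Your ``bridge'' correctly unpacks the Knill--Laflamme purity condition for weight-$\le k$ errors $E_A\otimes I_{A^c}$ into the single identity $R^{(A)}_{st}=\delta_{st}d^{-k}I_A$ (the local error basis restricted to $A$ spans all operators on $A$, so testing against it is equivalent to fixing $R^{(A)}_{ts}$), and your backward direction replaces the paper's one-line ``the image of $\cS$ is a $k$-uniform space by definition'' with the explicit polarization argument using $(\alpha_m,\alpha_n)=\tfrac{1}{\sqrt2}(1,1)$ and $\tfrac{1}{\sqrt2}(1,i)$ to kill the off-diagonal blocks $R^{(A)}_{mn}$. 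What your route buys is a proof that does not depend on the external reference and that makes visible exactly where the standing assumption ``reductions proportional to the identity'' enters (it pins the common reduction to $d^{-k}I_A$, which is what upgrades the code to a \emph{pure} one); what the paper's route buys is brevity. One small inaccuracy: Lemma~\ref{lem:orthonomal} is not really what is being invoked in your backward step --- the polarization computation on the operators $R^{(A)}_{st}$ is self-sufficient and does not pass through a Schmidt decomposition --- but nothing in the argument depends on that citation.
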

\begin{proof}
 ``$\Rightarrow$'' If there exits a pure $((N,d,k+1))_d$ QECC, then there exists a $k$-uniform space $\cU$ in $(\bbC^d)^{\otimes N}$ of dimension $d$ by Lemma~\ref{lem:code-and-k-uniformspace} in Appendix~\ref{appendix:orthonomal}. Assume that $\{\ket{\psi_j}\}_{j=0}^{d-1}$ is an orthonormal basis in $\cU$, and  $\{\ket{j}\}_{j=0}^{d-1}$ is a computational basis in $\bbC^d$. We can define a masker:
\begin{equation*}
\cS: \ket{j}\rightarrow\ket{\psi_j}, \quad \forall \ 0\leq j\leq d-1.
\end{equation*}
  The general superposition state $\sum_{j=0}^{d-1}\alpha_j\ket{j}\in \bbC^d$ should be mapped into $\sum_{j=0}^{d-1}\alpha_j\ket{\psi_j}\in (\bbC^d)^{\otimes N}$, where $\sum_{j=0}^{d-1} |\alpha_j|^2=1$. Since $\sum_{j=0}^{d-1}\alpha_j\ket{\psi_j}\in\cU$, it is a $k$-uniform state. Thus all states in $\bbC^d$ can be $k$-uniformly masked in $(\bbC^d)^{\otimes N}$.

 ``$\Leftarrow$''  We assume that all states in $\bbC^d$ can be $k$-uniformly masked in $(\bbC^d)^{\otimes N}$. Let $\cS$ be the masker, then the image of $\cS$ is a $k$-uniform space in $(\bbC^d)^{\otimes N}$ of dimension $d$ by definition. Thus there exists a pure $((N,d,k+1))_d$ QECC by Lemma~\ref{lem:code-and-k-uniformspace} in Appendix~\ref{appendix:orthonomal}.
\end{proof}
\vspace{0.4cm}

For a pure $((N,K,k+1))_d$ code, the quantum Singleton bound \cite{rains1999nonbinary} is
\begin{equation}\label{eq:singleton}
K\leq d^{N-2k}.
\end{equation}
When $N$ is even, $k=\frac{N}{2}$, and $K=d$, a $((N,d,\frac{N}{2}+1))_d$ code does not exist by Eq.~(\ref{eq:singleton}). When then reduction states of $k$ parties are proportional to identity, an arbitrary state in $\bbC^{d}$ cannot be strongly masked in $(\bbC^{d})^{\otimes N}$ by Theorem~\ref{thm:QECC-and-masking}, and it is a special case of Proposition~\ref{pro:impossibleforeven}. Finally, we should emphasize that Theorem~\ref{thm:k-uniformmasking} can be obtained by Theorem~\ref{thm:QECC-and-masking}. A $(k+1)$-uniform state in $(\bbC^d)^{\otimes(N+1)}$ is a pure $((N+1,1,k+2))_d$ code \cite{scott2004multipartite}. Since a pure $((N+1,1,k+2))_d$ code can imply a pure $((N,d,k+1))_d$ code \cite{rains1998quantum} (The converse is not true in generally), all states in $\bbC^d$ can be $k$-uniformly masked in $(\bbC^d)^{\otimes N}$ by Theorem~\ref{thm:QECC-and-masking}. Specially, when $N$ is odd,  an AME state in $(\bbC^d)^{\otimes (N+1)}$ is a pure $((N+1,1,\frac{N+1}{2}+1))_d$  code. It is equivalent to a pure $((N,d,\frac{N+1}{2}))_d$ code \cite{Huber2020quantumcodesof}. See Fig.~\ref{Fig:masking} for a summary of Sec.~\ref{sec:k-uniformmasking}.
%Further, Ref.~\cite{li2018masking} proposed an open question, that is whether all  states in $\bbC^d$ can be ( 1-uniformly) masked in tripartite quantum system $(\bbC^n)^{\otimes 3}$ with $n<d$ or not. We give a negative answer to this problem if we are required that the reduction states are proportional to identity.  This masking scheme is equivalent to a pure  $((3,d,2))_n$ code with $n<d$ by Theorem~\ref{thm:QECC-and-masking}. The $((3,d,2))_n$ code must have $d\leq n$ by Eq.~(\ref{eq:singleton}).

\section{Conclusion}\label{sec:conclusion}
In this paper, we have given general constructions for $k$-uniform states by using linear codes, especially for $4,5$-uniform states. We have also given a new quantum information masking which is called $k$-uniform quantum information masking, and shown that $k$-uniform states and QECCs can be used for $k$-uniform quantum information masking. There are some interesting problems left. One open problem is to determine the existence of unknown $k$-uniform states in Tables~\ref{table:4uni}, \ref{table:5uni} and \ref{Table:k-unfiormresults}.  Besides, are there other methods that can $k$-uniformly mask all the
states of $\bbC^d$ in  quantum systems $(\bbC^d)^{\otimes N}$ for $2\leq k\leq \fl{N}{2}$?

\section*{Acknowledgments}
\label{sec:ack}	
We thank Felix Huber for providing us the idea of connecting quantum error-correcting codes to $k$-uniform quantum information masking. FS and XZ were supported by NSFC under Grant No. 11771419,  the Fundamental Research Funds for the Central Universities, and Anhui Initiative in Quantum Information Technologies under Grant No. AHY150200. M.-S.L. was supported by
NSFC (Grants No.
11875160,  No. 11871295 and No. 12005092). LC was supported by the  NNSF of China (Grant No. 11871089), and the Fundamental Research Funds for the Central Universities (Grant No. ZG216S2005).

\appendix

%\section{The proof of Lemma~\ref{lem:rec}}\label{appendix:rec}
%\begin{proof}
%\begin{equation*}
%\begin{aligned}
%\tr_{\{(A_{j_1}B_{j_1}),(A_{j_2}B_{j_2}),\ldots,(A_{j_k}B_{j_k})\}^c}\ketbra{\varphi}{\varphi}&=\tr_{\{(A_{j_1}B_{j_1}),(A_{j_2}B_{j_2}),\ldots,(A_{j_k}B_{j_k})\}^c}(\ketbra{\psi}{\psi}\otimes\ketbra{\phi}{\phi})\\
%&=\tr_{\{B_{j_1},B_{j_2},\ldots,B_{j_k}\}^c}(\tr_{\{A_{j_1},A_{j_2},\ldots,A_{j_k}\}^c}(\ketbra{\psi}{\psi}\otimes\ketbra{\phi}{\phi}))\\
%&=\tr_{\{B_{j_1},B_{j_2},\ldots,B_{j_k}\}^c}(\frac{1}{d_1^k}I_{\{A_{j_1},A_{j_2},\ldots,A_{j_k}\}}\otimes\ketbra{\phi}{\phi})\\
%&=\frac{1}{d_1^k}I_{\{A_{j_1},A_{j_2},\ldots,A_{j_k}\}}\otimes\frac{1}{d_2^k}I_{\{B_{j_1},B_{j_2},\ldots,B_{j_k}\}}\\
%&=\frac{1}{(d_1d_2)^k}I_{\{(A_{j_1}B_{j_1}),(A_{j_2}B_{j_2}),\ldots,(A_{j_k}B_{j_k})\}}.
%\end{aligned}	
%\end{equation*}
%\end{proof}
%\vspace{0.4cm}
\section{Quantum error-correcting codes}\label{appendix:QECC}
 Let $\{e_{j}\}_{j=0}^{d^2-1}$  be an orthogonal operator basis for $\bbC^{d}$ that includes the identity $e_{0}=I$, such that $Tr(e_{i}^{\dag}e_{j})=\delta_{ij}d$.  On the $N$-partite quantum system $(\bbC^{d})^{\otimes N}$, a local error basis $\mathcal{E}$ consists of
\begin{equation*}
E_{\alpha}=e_{\alpha_{1}}\otimes e_{\alpha_{2}}\otimes \cdots \otimes e_{\alpha_{N}},
\end{equation*}
where $\alpha=(\alpha_{1},\alpha_{2},\cdots,\alpha_{N})\in \{0,1,\ldots,d^2-1\}^N$, each $e_{\alpha_{i}}$ acts on $\bbC^{d}$, and $Tr(E_{\alpha}^{\dag}E_{\beta})=\delta_{\alpha\beta}d^{n}$. The weight of a local error operator $E_{\alpha}$ is the number of $e_{i}$ which is not equal to identity, that is, $wt(E_{\alpha})=wt(\alpha)$.

Let $\mathcal{Q}$ be a $K$-dimensional subspace of $(\bbC^{d})^{\otimes N}$ spanned by the orthogonal basis $\{\ket{i}|i=0,1,\cdots,K-1\}$. Then $\mathcal{Q}$ is called an \emph{$((N,K,\delta))_{d}$ quantum error-correcting code} if for all $E\in\mathcal{E}$ with $wt(E)<\delta$,
	\begin{equation*}
	\langle i|E|j\rangle=\delta_{ij}C(E),
	\end{equation*}
	where the constant $C(E)$  depends only on $E$. Here $\delta$ is called the distance of the code. If $C(E)=d^{-N}Tr(E)$, then the code is called \emph{pure}. 
\vspace{0.4cm}

\section{The proof of the claim in  Lemma~\ref{lem;linirr}}\label{appendix:prooflinear}
\begin{proof}
	Let us introduce the \emph{generator matrix} and the parity check matrix for a linear code first. The generator matrix for an $[N,t,w]_d$ code $\cal{C}$ is any $t\times N$ matrix $G$ whose rows form a basis for $\cal{C}$. For a row vector $\bv\in \bbF_d^t$, a codeword  $\bc\in \cal{C}$ can be written as $\bc=\bv\cdot G$. The generator matrix has a standard form $G=[I_t|A]$, where $I_t$ is a $t\times t$ identity matrix and $A$ is a $t\times (N-t)$ matrix.  The \emph{parity check matrix} is an $(N-t)\times N$ matrix $H$ which satisfies $H\cdot \bc^{\mathrm{T}}=\mathbf{0}$ if and only if $\bc\in\cal{C}$. The parity check matrix for $\cal{C}$ can be written as $H=[-A^{\mathrm{T}}|I_{N-t}]$. If we consider the dual code $\cal{C}^{\bot}$, $H$ and $G$ are the generator and parity check matrices for $\cal{C}^{\bot}$, respectively. The linear code $\cal{C}$ has minimum distance $w$ if and only if every $w-1$ columns of  $H$ are linearly independent and some $w$ columns are linearly dependent \cite[Theorem 10, Chapter 1]{macwilliams1977theory}. Further, The dual code $\cal{C}^{\bot}$ has minimum distance $w^{\bot}$ if and only if every $w^{\bot}-1$ columns of  $G$ are linearly independent and some $w^{\bot}$ columns are linearly dependent. Now, we are ready to prove this claim.
	
	By \cite[Chapter 1.5.4]{huffman2010fundamentals}, we know that  $\mathcal{C}_1\oplus \mathcal{C}_2$ is an $[N_1+N_2,t_1+t_2,\min\{w_1,w_2\}]_d$ code. Assume $G_1$ and $G_2$ are generator matrices of $\mathcal{C}_1$ and $\mathcal{C}_2$, respectively. Then  $G=\begin{pmatrix}
	G_1 & 0\\
	0   & G_2
	\end{pmatrix}$ is the generator matrix of $\mathcal{C}_1\oplus \mathcal{C}_2$. Since the dual distance of $\mathcal{C}_j$  is $w_j^{\bot}$,  every $w_j^{\bot}-1$ columns of $G_j$ are linearly independent, and some $w_j^{\bot}$ columns are linearly dependent for each $j=1,2$. It implies that every  $\min\{w_1^{\bot}-1,w_2^{\bot}-1\}=\min\{w_1^{\bot},w_2^{\bot}\}-1$ columns of $G$  are linearly independent, and some $\min\{w_1^{\bot},w_2^{\bot}\}$ columns are linearly dependent. Thus the dual distance of $\mathcal{C}_1\oplus \mathcal{C}_2$ is $\min\{w_1^{\bot},w_2^{\bot}\}$.
\end{proof}
\vspace{0.4cm}

\section{The details for Table~\ref{table:5uni}}\label{appendix:5-uniform}
Some of our constructions are from  algebraic geometry codes (see \cite{algebraic2009} for definitions). For algebraic geometry codes, if there exist $N$ rational points and genus $g$  in Galois field $\bbF_q$, then there exists a linear code $[N,k,N-k+1-g]$ with dual distance $k+1-g$ for any  $g\leq k\leq N$ \cite{algebraic2009}. For example, there exist $18$ rational points and genus $2$  in $\bbF_8$ \cite{Manpoint}, then there exists a $[18,7,10]_8$ code with dual distance $6$, and hence exists a linear $\iro(r,N,8,5)$ for any $N\in [14,18]$ by Lemmas~\ref{dis} and \ref{lem:codearray}.
See  \cite{Manpoint} for  curves with many points. By Theorem~\ref{thm:dpri}, when $d\geq 18$ is a prime power, there exists a $5$-uniform state in $(\bbC^{d})^{\otimes N}$ for any $N\geq 10$. By \cite[Theorem 12]{feng2017multipartite},  there exists a $5$-uniform state in $(\bbC^d)^{\otimes N}$ for any prime $d\geq 2$ and $N\geq 18$. Then there exists a $5$-uniform state in $(\bbC^d)^{\otimes N}$ for any $d\geq 2$ and $N\geq 18$ by Lemma~\ref{lem:rec}.  Thus, we only need to consider $d<18$ that is a prime power, and $10\leq N\leq 17$.
\begin{enumerate}[{(i)}]
	\item When $d=2$, $5$-uniform states in $(\bbC^2)^{\otimes N}$ do not exist for each $N=10,11$  by Rains' bound \cite{rains1999quantum}. By \cite{AMEtable}, we know that there exists an $\oa(256,16,2,5)$. The minimum distance of the $\oa(256,16,2,5)$ is $6$ by using computer. Thus it is irredundant and there exists a $5$-uniform state  in $(\bbC^2)^{\otimes 16}$.
	
	\item When $d=3$,  a $5$-uniform state in $(\bbC^3)^{\otimes N}$ exists for any $N=10$ by \cite{AMEtable}, $N=12$ by a $[12,6,6]_3$ self-dual code in \cite{Selftable} and  $N\in[14,17]$ by \cite{feng2017multipartite},
	
	\item When $d=4$, a $5$-uniform state in $(\bbC^4)^{\otimes N}$ exists for any  $N=10$ by \cite{AMEtable}, and $N=12,14,16,17,18$ by  $[12,6,6]_4$, $[14,7,6]_4$, and $[18,9,8]_4$  self-dual codes in \cite{Selftable}.
	
	\item When $d=5$, a $5$-uniform state in $(\bbC^5)^{\otimes N}$ exists for any $N=10$ by \cite{AMEtable}, and $N\in[12,17]$ by \cite{feng2017multipartite}.
	
	\item When $d=7$,   a $5$-uniform state in $(\bbC^7)^{\otimes N}$ exists for any $N\in [10, 11]$ by \cite{AMEtable}, and $N\in[12,17]$ by \cite{feng2017multipartite}.
	
	\item When $d=8$, a $5$-uniform state in $(\bbC^8)^{\otimes N}$ exists for any $N\in [10,14]$ by \cite{AMEtable}, and $N\in[15,17]$ by an $[18,7,10]_8$ algebraic geometry code with dual distance $6$ in \cite{Manpoint}.
	
	\item When $d=9$, a $5$-uniform state in $(\bbC^9)^{\otimes N}$ exists for any $N=10,12$, $N\in [14,17]$ by Lemma~\ref{lem:rec}, $N=11$ by \cite{raissi2019constructing} and $N=13$ by a $[16,6,10]_9$ algebraic geometry code with dual distance $6$ in \cite{Manpoint}.
	
	\item When $d=11$,  a $5$-uniform state in $(\bbC^{11})^{\otimes N}$ exists for any $N\in [10,12]$ by Theorem~\ref{thm:dpri}, and $N\in[13,17]$ by a $[18,6,12]_{11}$ algebraic geometry code with dual distance $6$ in \cite{Manpoint}.
	
	\item When $d=13$,  a $5$-uniform state in $(\bbC^{13})^{\otimes N}$ exists for any $N\in [10,14]$ by Theorem~\ref{thm:dpri}, and $N\in[15,17]$ by a $[21,6,15]_{13}$ algebraic geometry code with dual distance $6$ in \cite{Manpoint}.
	
	\item When $d=16,17$,  a $5$-uniform state in $(\bbC^d)^{\otimes N}$ exists for any $N\in [10,17]$ by Theorem~\ref{thm:dpri}.
	
\end{enumerate}

By using Lemma~\ref{lem:rec}, we are able to list the existence of $5$-uniform states in $(\bbC^{d})^{\otimes N}$  in Table~\ref{table:5uni}.
\vspace{0.4cm}

\section{Two lemmas used in Sec.~\ref{sec:k-uniformmasking}}\label{appendix:orthonomal}
\begin{lemma}\label{lem:orthonomal}
	Assume $\ket{\psi}=\sum_{j}\sqrt{\lambda_j}\ket{a_j}\ket{b_j}$, $\lambda_j>0$ and $\{\ket{a_j}\}$ is an orthonomal set,   then  $\rho_A=\sum_{j}\lambda_j\ketbra{a_j}{a_j}$ if and only if $\braket{b_s}{b_t}=\delta_{s,t}$.
\end{lemma}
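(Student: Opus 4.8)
The plan is to compute $\rho_A=\tr_B(\ketbra{\psi}{\psi})$ explicitly and then compare it entry-by-entry with the target $\sum_j\lambda_j\ketbra{a_j}{a_j}$, exploiting the orthonormality of $\{\ket{a_j}\}$. First I would expand the projector and take the partial trace over $B$, using $\tr_B(\ket{b_s}\bra{b_t})=\braket{b_t}{b_s}$, to obtain
\begin{equation*}
\rho_A=\sum_{s,t}\sqrt{\lambda_s\lambda_t}\,\braket{b_t}{b_s}\,\ket{a_s}\bra{a_t}.
\end{equation*}
This single identity is the engine of the whole argument, since it exposes $\braket{b_t}{b_s}$ as the coefficient attached to the operator $\ket{a_s}\bra{a_t}$.

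The ``if'' direction is then immediate: substituting $\braket{b_t}{b_s}=\delta_{s,t}$ collapses the double sum to $\sum_j\lambda_j\ketbra{a_j}{a_j}$. For the ``only if'' direction I would extract individual matrix entries by sandwiching between $\bra{a_s}$ and $\ket{a_t}$. Since $\{\ket{a_j}\}$ is orthonormal, the displayed form gives $\bra{a_s}\rho_A\ket{a_t}=\sqrt{\lambda_s\lambda_t}\,\braket{b_t}{b_s}$, whereas the hypothesis $\rho_A=\sum_j\lambda_j\ketbra{a_j}{a_j}$ gives $\bra{a_s}\rho_A\ket{a_t}=\lambda_s\delta_{s,t}$. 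Equating the two for all $s,t$ yields
\begin{equation*}
\sqrt{\lambda_s\lambda_t}\,\braket{b_t}{b_s}=\lambda_s\delta_{s,t}.
\end{equation*}
Taking $s=t$ and cancelling $\lambda_s>0$ gives $\braket{b_s}{b_s}=1$; taking $s\neq t$ and cancelling $\sqrt{\lambda_s\lambda_t}>0$ gives $\braket{b_t}{b_s}=0$. Together these are precisely $\braket{b_s}{b_t}=\delta_{s,t}$.

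The only delicate point, and the nearest thing to an obstacle, is that $\{\ket{b_j}\}$ is \emph{not} assumed normalized or orthogonal, so I cannot treat the given expansion as a genuine Schmidt decomposition from the outset; it is exactly the strict positivity $\lambda_j>0$ that licenses the cancellation of $\sqrt{\lambda_s\lambda_t}$ in both the diagonal and off-diagonal cases. I would therefore invoke $\lambda_j>0$ explicitly at both steps, and note that the linear independence of the operators $\{\ket{a_s}\bra{a_t}\}$ underlying the entry extraction is itself a consequence of the orthonormality of $\{\ket{a_j}\}$.
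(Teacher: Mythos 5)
Your proposal is correct and follows essentially the same route as the paper: compute $\rho_A=\tr_B\ketbra{\psi}{\psi}$ as a double sum with coefficients $\sqrt{\lambda_s\lambda_t}\braket{b_t}{b_s}$, sandwich between the orthonormal $\ket{a_j}$'s to equate matrix entries, and use $\lambda_j>0$ to cancel. The only difference is cosmetic — you spell out the ``if'' direction and the cancellation more explicitly than the paper, which dismisses the former as obvious.
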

\begin{proof}
	``$\Leftarrow$'' Obviously.
	
	``$\Rightarrow$'' Since
	\begin{equation}
	\rho_{A}=\tr_{B}\ketbra{\psi}{\psi}=\sum_{j,\ell}\sqrt{\lambda_j\lambda_\ell}\braket{b_\ell}{b_j}(\ketbra{a_j}{a_\ell}),
	\end{equation}
	we have
	\begin{equation}
	\begin{aligned}
	\bra{a_t}\rho_A\ket{a_s}&=\bra{a_t}(\sum_{j,\ell}\sqrt{\lambda_j\lambda_\ell}\braket{b_\ell}{b_j}(\ketbra{a_j}{a_\ell}))\ket{a_s}\\
	&=\sqrt{\lambda_t\lambda_s}\braket{b_s}{b_t}.
	\end{aligned}
	\end{equation}
	Moreover,
	\begin{equation}
	\bra{a_t}\rho_A\ket{a_s}=\bra{a_s}(\sum_{j}\lambda_j\ketbra{a_j}{a_j})\ket{a_t}=\lambda_s\d_{s,t}.
	\end{equation}
	It implies $\lambda_s\d_{s,t}=\sqrt{\lambda_t\lambda_s}\braket{b_s}{b_t}.$ Hence $\braket{b_s}{b_t}=\delta_{s,t}$.
\end{proof}
\vspace{0.4cm}

\begin{lemma}\label{lem:code-and-k-uniformspace}\cite{Huber2020quantumcodesof}
	The following objects are equivalent:
	\begin{enumerate}[(i)]
		\item a pure $((N,K,k+1))_d$ QECC;
		\item a $k$-uniform space in $(\bbC^d)^{\otimes N}$ of dimension $K$.
	\end{enumerate}
\end{lemma}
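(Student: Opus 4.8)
The plan is to recast both objects through a single spectral criterion: a normalized $\ket{\psi}\in(\bbC^d)^{\otimes N}$ is $k$-uniform if and only if $\langle\psi|E|\psi\rangle=d^{-N}\tr(E)$ for every local error operator $E$ with $wt(E)\le k$. To establish this criterion I would expand an arbitrary reduced state in the orthogonal operator basis $\{E_\alpha\}$ of Appendix~\ref{appendix:QECC}. For a subset $S$ of size $k$, the operators $E_\beta$ with $\supp(\beta)\subseteq S$ form an orthogonal basis of the operators on $\otimes_{\ell\in S}\cH_{A_\ell}$, so $\rho_S=d^{-k}\sum_{\supp(\beta)\subseteq S}\tr(E_\beta^{\dagger}\rho)\,E_\beta$ where $\rho=\ketbra{\psi}{\psi}$. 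Since $\tr(E_\beta^{\dagger}\rho)=\langle\psi|E_\beta^{\dagger}|\psi\rangle$ and $\tr(E_\beta)=0$ for every nonidentity $\beta$, the equality $\rho_S=d^{-k}I$ holds simultaneously for all $|S|=k$ exactly when $\langle\psi|E|\psi\rangle=0=d^{-N}\tr(E)$ for every $E$ of weight $1\le wt(E)\le k$ (the weight-$0$ case being normalization). This is the reformulation I will use for both directions.

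With the criterion in hand, the forward implication is immediate. Assuming (i), a pure code with orthonormal basis $\{\ket{u}\}_{u=0}^{K-1}$, the purity condition reads $\langle u|E|v\rangle=\delta_{uv}d^{-N}\tr(E)$ for all $wt(E)\le k$. For any $\ket{\phi}=\sum_u\alpha_u\ket{u}$ with $\sum_u|\alpha_u|^2=1$ one computes $\langle\phi|E|\phi\rangle=\sum_{u,v}\alpha_u^{*}\alpha_v\,\delta_{uv}d^{-N}\tr(E)=d^{-N}\tr(E)$ for every $wt(E)\le k$. By the criterion $\ket{\phi}$ is $k$-uniform; as $\ket{\phi}$ was an arbitrary element of the span, the code is a $k$-uniform space of dimension $K$, giving (ii).

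For the converse I assume (ii) and fix any orthonormal basis $\{\ket{u}\}$ of the $k$-uniform space. Applying the criterion to the basis vectors gives the diagonal conditions $\langle u|E|u\rangle=d^{-N}\tr(E)$ for all $wt(E)\le k$. To obtain the off-diagonal Knill--Laflamme conditions with a common constant, I would feed the two superpositions $\tfrac{1}{\sqrt2}(\ket{u}+\ket{v})$ and $\tfrac{1}{\sqrt2}(\ket{u}+\mathrm{i}\ket{v})$, both normalized and in the space, into the criterion; subtracting the known diagonal contributions yields $\langle u|E|v\rangle+\langle v|E|u\rangle=0$ and $\langle u|E|v\rangle-\langle v|E|u\rangle=0$ respectively, whence $\langle u|E|v\rangle=0$ for $u\ne v$. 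Combining with the diagonal gives $\langle u|E|v\rangle=\delta_{uv}d^{-N}\tr(E)$ for all $wt(E)\le k=\delta-1$, which is exactly a pure $((N,K,k+1))_d$ QECC, establishing (i).

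I expect the main obstacle to be the bookkeeping in the criterion of the first paragraph: one must check that requiring $\rho_S=d^{-k}I$ for every size-$k$ subset $S$ is equivalent to the vanishing of $\langle\psi|E|\psi\rangle$ for every nonidentity $E$ of weight at most $k$, using that each such $E$ is supported on some size-$k$ set and that the $E_\beta$ with $\supp(\beta)\subseteq S$ exhaust the operators on that block (and that $\{E_\beta^{\dagger}\}$ is again an orthogonal error basis, so testing $E$ and $E^{\dagger}$ is interchangeable). The superposition step in the converse is then routine, and the two phases $1$ and $\mathrm{i}$ are chosen precisely to decouple the real and imaginary parts of $\langle u|E|v\rangle$.
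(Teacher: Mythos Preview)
Your proof is correct. The criterion in the first paragraph is exactly the right pivot: writing $\rho_S$ in the local error basis and using $\tr(e_i)=0$ for $i\ne 0$ shows that $k$-uniformity of $\ket{\psi}$ is equivalent to $\langle\psi|E|\psi\rangle=d^{-N}\tr(E)$ for all $E$ of weight at most $k$; the forward direction is then linearity, and the converse via the two phases $1$ and $\mathrm{i}$ is a clean polarization argument.

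As for the comparison you asked about: the paper does not actually prove this lemma. It is stated in Appendix~\ref{appendix:orthonomal} with a citation to \cite{Huber2020quantumcodesof} and no argument of its own, so there is no ``paper's proof'' to weigh your approach against. Your write-up is a self-contained proof of the cited fact, and the only point worth tightening is the one you already flagged: in the expansion $\rho_S=d^{-k}\sum_{\supp(\beta)\subseteq S}\langle\psi|E_\beta^{\dagger}|\psi\rangle\,F_\beta$ (with $F_\beta$ the restriction of $E_\beta$ to $S$), the passage from $E_\beta^{\dagger}$ to $E_\beta$ is harmless because $\langle\psi|E_\beta^{\dagger}|\psi\rangle=\overline{\langle\psi|E_\beta|\psi\rangle}$, so one vanishes iff the other does.
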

\vspace{0.4cm}
\bibliographystyle{IEEEtran}
\bibliography{reference}
\end{document}